\newcommand{\real}{\ensuremath{\mathbb{R}}}
\newcommand{\smat}[1]{\ensuremath{\left[\begin{smallmatrix}#1\end{smallmatrix}\right]}}
\newcommand{\bmat}[1]{\ensuremath{\begin{bmatrix}#1\end{bmatrix}}}
\def\ifemptyarg#1{%
  \if\relax\detokenize{#1}\relax 
    \expandafter\@firstoftwo
  \else
    \expandafter\@secondoftwo
  \fi}
\newcommand{\mb}[1]{\mathbf{#1}}
\newcommand{\Zc}{\boldsymbol{\zeta}}
\newcommand{\AB}[1][]{\ifemptyarg{#1}{\ensuremath{\bmat{A~\,B}}}{\ensuremath{\bmat{A_{#1}~\,B_{#1}}}}}
\newcommand{\ABC}[1][]{\ifemptyarg{#1}{\ensuremath{\big[A~\,B~\,C\big]}}{\ensuremath{\big[A_{#1}~\,B_{#1}~\,C_{#1}\big]}}}
\newcommand{\B}[1][]{\ifemptyarg{#1}{\ensuremath{\mathcal{B}}}{\ensuremath{\mathcal{B}_{#1}}}}
\newcommand{\ABCd}[1][]{\ifemptyarg{#1}{\ensuremath{\big[A~\,B~\,C~\,d\big]}}{\ensuremath{\big[A_{#1}~\,B_{#1}~\,C_{#1}~\,d_{#1}\big]}}}
\DeclareMathSymbol{\cdoT}{\mathord}{symbols}{"01}
\DeclareMathOperator{\diag}{diag}
\DeclareMathOperator{\Tr}{Tp}
\newtheorem{remark}{Remark}
\newtheorem{lemma}{Lemma}
\newenvironment{proofof}[1]{\textit{Proof of #1.}}{\hfill \hspace*{0.01pt}  \hfill$\square$}
\newtheorem{problem}{Problem}
\newtheorem{theorem}{Theorem}
\newtheorem{assumption}{Assumption}
\title{\LARGE \bf
Setpoint control of bilinear systems from noisy data
}
\author{Andrea Bisoffi$^{1}$, Dominiek M. Steeman$^{2}$ and Claudio De Persis$^{2}$
\thanks{This publication is part of the project Digital Twin with project number
P18-03 of the research programme TTW Perspective which is (partly)
financed by the Dutch Research Council (NWO).}
\thanks{$^{1}$A. Bisoffi is with the Department of Electronics, Information,
and Bioengineering, Politecnico di Milano, 20133, Italy
        {\tt\small\{andrea.bisoffi\}@polimi.it}}%
\thanks{$^{2}$D. M. Steeman, C. De Persis are with the Engineering and Technology
Institute, University of Groningen, 9747 AG, The Netherlands
        {\tt\small \{d.m.steeman@student., c.de.persis@\}rug.nl}}%
}
\begin{document}

\maketitle

\begin{abstract}
We consider the problem of designing a controller for an unknown bilinear system using only noisy input-states data points generated by it.
The controller should achieve regulation to a given state setpoint and provide a guaranteed basin of attraction.
Determining the equilibrium input to achieve that setpoint is not trivial in a data-based setting and we propose the design of a controller in two scenarios. The design takes the form of linear matrix inequalities and is validated numerically for a \'Cuk converter.
\end{abstract}

\section{Introduction}
\label{sec:intro}

A popular modus operandi in the recent literature on data-driven control design has been the following.
From a dynamical system, assumed to be in a certain class, data affected by some kind of noise are collected.
Only a bound on the noise is assumed to be known, and the uncertainty associated with the noise translates into a set of possible dynamical systems of the given structure which can explain data and among which the system generating data cannot be distinguished.
This calls for designing a controller that robustly stabilizes all dynamical systems in such set.
This modus operandi, which combines set membership identification \cite{milanese2004set} and robust control \cite{schererweiland} in a seamless algorithm for control design from data, has been pursued in \cite{coulson2019data,depersis2019formulas,vanwaarde2020noisy,berberich2022combining}, to name but a few recent works.
Here, we follow this approach for the class of bilinear dynamical systems.
Relevantly, bilinear systems are a halfway house between linear systems, which are endowed with strong structure and properties, and nonlinear systems, which are less tractable but capture more complex phenomena.
Also, approaches such as Carleman linearization \cite[p.~110]{rugh1981nonlinear} or Koopman operator \cite{goswami2022bilinearization} lead to bilinear systems after truncation and, in this sense, bilinear system approximate more complex nonlinear ones.

Our results have, as a starting point, the noisy data collected from a bilinear system.
Using only these data and the noise bound, but \emph{without} knowledge of the bilinear system parameters, our goal is to asymptotically stabilize a given state setpoint, in spite of the uncertainty associated with data, while providing a guaranteed basin of attraction for it.
In a model-based setting, stabilization of a nonzero setpoint entails finding, in the first place, an equilibrium input that enforces the nonzero setpoint;
when one is found, stabilization of this nonzero setpoint boils down to stabilization of a zero setpoint (the origin), by a change of coordinates.
We show here that, in a data-based setting, requiring stabilization of a nonzero setpoint has some interesting implications.

\emph{Related literature.}  We considered data-driven control of bilinear systems for a zero setpoint in our previous work \cite{bisoffi2020bilinear} and here we improve on \cite{bisoffi2020bilinear} by considering a wider setting (multiple inputs, discrete and continuous time) and less conservative assumptions (no knowledge of upper bounds on the norm of system matrices).
A detailed comparison is in Remark~\ref{remark:comparison}.
A data-driven control result for bilinear systems is given in~\cite{yuan2022data} but, unlike here, noise-free data are considered.
In a model-based setting, stabilization of bilinear systems via linear state feedback and a quadratic or polyhedral Lyapunov function is considered in \cite{tarbouriech2009control,bitsoris2008ifac,khlebnikov2018quadratic,amato2009stabilization} 
and such Lyapunov function can be leveraged to give an underapproximation of the basin of attraction \cite{kramer2021stability}.
We proceed in an analogous way, but our design is fully data-driven and, unlike these works, does not require the knowledge of the matrices of the bilinear system.
Incidentally, this has some interesting consequences for nonzero setpoints.
Aside from control design, \cite{favoreel1999subspace,chen2000subspace,sontag2009input} tailor system identification methods for bilinear systems.
More recently, \cite{markovsky2022data} pursues simulation of trajectories of so-called generalized bilinear systems from data by embedding such a generalized bilinear system into a linear one; \cite{karachalios2022framework} constructs a quadratic-bilinear system that fits input-output data; \cite{sattar2022finite} proposes a finite-sample analysis for the error rate in learning bilinear systems.

\emph{Contribution.} 
For setpoint control of bilinear systems from data, we propose two designs, one for when the equilibrium input corresponding to the given setpoint is known and one for when it is not.
The first one improves on \cite{bisoffi2020bilinear} by, most notably, not requiring any knowledge of upper bounds on the norms of system matrices, see Remark~\ref{remark:comparison}.
The second one highlights some interesting peculiarities of setpoint control in a noisy data-based setting.

\emph{Structure.} Section~\ref{sec:data} formulates the problem. 
Sections~\ref{sec:known_u_bar} and \ref{sec:UNknown_u_bar} propose two solutions to the problem, for two different scenarios.
Section~\ref{sec:example} validates the designs on a numerical example.

\emph{Notation.}
The Kronecker product between two matrices is denoted by $\otimes$ and its main properties are in~\cite[\S 4.2.]{horn1994topics}.
For natural numbers $n$ and $m$, $I_n$ (or $I$) is an identity matrix of dimension $n$ (or of suitable dimension) and $0_{m,n}$ (or $0$) is a matrix of zeros of dimension $m$-by-$n$ (or of suitable dimension).
The induced 2-norm of a matrix $A$ is $\| A\|$; for a scalar $a \ge 0$, $\| A \| \le a$ if and only if $A^\top A \preceq a^2 I$.
For a matrix $A$ with real entries, the transposition operator is $\Tr A := A + A^\top$.
For matrices $A$, $B$ and $C$ of compatible dimensions, we abbreviate $A B  C (AB)^\top$ to $A B \cdoT C[\star]^\top$ or $[\star]^\top C \cdot (AB)^\top$, where the dots in the expressions clarify unambiguously which terms are to be transposed.
For matrices $A=A^\top$, $B$, $C=C^\top$, we abbreviate the symmetric matrix $\smat{A & B \\ B^\top & C}$ as $\smat{A & B \\ \star & C}$ or $\smat{A & \star \\ B^\top & C}$.

\section{Data collection, problem statement, preliminary reformulations}
\label{sec:data}

Consider a bilinear system
\begin{equation}
\label{sys}
x^\circ = A_\star x + B_\star u + C_\star (u \otimes x) + d_\star
\end{equation}
where $x \in \real^n$ is the state, $u \in \real^m$ is the input and $x^\circ$ is the update $x^+$ of $x$ in discrete time or the time derivative $\dot x$ of $x$ in continuous time.
\eqref{sys} is equivalently reformulated as
\begin{equation}
\label{sys_alt}
x^\circ = A_\star x + B_\star u + C_\star (I_m \otimes x) u + d_\star,
\end{equation}
since $u \otimes x = (I_m u) \otimes (x\cdot 1) = (I_m \otimes x) u$ for all $x$ and $u$.
The constant matrices $A_\star$, $B_\star$, $C_\star$, $d_\star$ in~\eqref{sys_alt} are \emph{unknown} to us and we rely instead on noisy data collected through an experiment on the system.
The data points collected at times $t_0$, $t_1$, \dots, $t_{T-1}$ satisfy
\begin{align}
& \hspace*{-3pt} x^\circ(t_i) = A_\star x(t_i) + B_\star u(t_i) \notag \\
& \hspace*{-3pt} + C_\star (u(t_i) \otimes x(t_i)) + d_\star + e(t_i), \,\,  i = 0, 1,\dots, T-1 \label{data_vec}
\end{align}
where $e(t_0)$, $e(t_1)$, \dots, $e(t_{T-1})$ is the sequence of unknown process noise perturbing the data points of the experiment.
The relations in~\eqref{data_vec} can be written compactly as
\begin{equation}
X_1 = A_\star X_0 + B_\star U_0 + C_\star S_0 + d_\star O_0 + E_0 \label{data_matr_eq}
\end{equation}
by introducing the measured sequences
\begin{subequations}
\label{data_mat}
\begin{align}
X_1 & := \bmat{x^\circ(t_0)  & \dots  & x^\circ(t_{T-1})} \\
& =
\begin{cases}
\bmat{x(t_0 +1)  & \dots  & x(t_{T-1}+1)} & \text{ in discr. time}\\
\bmat{\dot x(t_0)  & \dots  & \dot x(t_{T-1})} & \text{ in cont. time}
\end{cases} \notag \\
X_0 & := \bmat{x(t_0)  & \dots  & x(t_{T-1})}\\
U_0 & := \bmat{u(t_0)  & \dots & u(t_{T-1})}\\
S_0 & := \bmat{ u(t_0)\otimes x(t_0) & \dots &  u(t_{T-1}) \otimes x(t_{T-1})},\label{data-bil-S0}
\end{align}
\end{subequations}
an auxiliary sequence and the unknown noise sequence
\begin{align*}
O_0 := \bmat{ 1 & \dots & 1}
\text{ and }
E_0 := \bmat{e(t_0)  & \dots & e(t_{T-1})}.
\end{align*}
In continuous time, we assume for simplicity to measure the state derivative.
When not available, $\dot{x}$ can be recovered, e.g., from dense sampling of the state with a reconstruction error captured by $e$; see more details in~\cite[\S 3.1 and Remark 1]{luppi2024data}.

When data points are collected from a physical plant, it is reasonable to assume that the noise sequence is bounded in some sense.
Here, we consider an energy bound on the noise sequence of the experiment, i.e., we assume that $E_0$ belongs to
\begin{equation}
\label{set_E}
\mathcal{E} := \{ E \in \real^{n \times T} \colon E E^\top \preceq \Xi \Xi^\top \}
\end{equation}
where $\Xi \Xi^\top$ is some positive semidefinite matrix (by construction).
We consider an energy bound for simplicity but different types of bounds can be handled with the techniques in~\cite{bisoffi2022petersen}.

As in~\cite{milanese2004set}, \eqref{data_matr_eq} and \eqref{set_E} lead to a set of \emph{dynamics consistent with data}, that is, a set of matrices $\ABCd$ that could have generated $X_1$, $X_0$, $U_0$, $S_0$ for a noise sequence $E \in \mathcal{E}$, that is,
\begin{align}
\label{set_C}
& \mathcal{C} := \{ \ABCd \colon \notag \\
& \quad X_1 = A X_0 + B U_0 + C S_0 + d O_0 + E, E \in \mathcal{E} \}.
\end{align}
Note that $E_0 \in \mathcal{E}$ if and only if $\ABCd[\star] \in \mathcal{C}$.
With
\begin{equation}
\label{matrix W_0}
W_0 := \smat{ X_0\\ U_0\\S_0\\ O_0},
\end{equation}
we make the next assumption on data.
\begin{assumption}
\label{ass:pers ext-bil}
Matrix $W_0$ in~\eqref{matrix W_0} has full row rank.
\end{assumption}
Full row rank of $W_0$ is related to persistence of excitation of the input and noise sequences, see the discussion in \cite[\S 4.1]{bisoffi2022petersen}.
Assumption~\ref{ass:pers ext-bil} is verified directly from data and, when $W_0$ does not have full row rank, it can typically be enforced by collecting more data points, thus adding columns to $W_0$.

For this setting, we can state our objective next.
\begin{problem}
\label{problem:statement}
Using only the noisy data in~\eqref{data_mat}, which satisfy Assumption~\ref{ass:pers ext-bil}, and given a desired setpoint $\bar{x}$ for the state, design parameters $K$ and $\bar u$ of an affine control law
\begin{align}
\label{control_law}
u = K (x - \bar{x}) + \bar{u}
\end{align}
so that, for the feedback interconnection of \eqref{sys_alt} with~\eqref{control_law}, either $\bar{x}$ or a ``small'' neighborhood of $\bar{x}$ is locally asymptotically stable (as a set), while providing a guaranteed basin of attraction.
\end{problem}

We now discuss Problem~\ref{problem:statement} and outline how we intend to solve it.
The selection of an affine control law is common in the model-based literature \cite{tarbouriech2009control,khlebnikov2018quadratic,bitsoris2008ifac,amato2009stabilization} since it enables pursuing the design of $K$ through linear matrix inequalities.
In a model-based setting, given the desired setpoint $\bar{x}$, one would select $\bar{u}$ based on the ``true'' $\ABCd[\star]$ so that $(\bar{x},\bar{u})$ is an equilibrium for the open loop in~\eqref{sys_alt}, i.e., given $\bar{x}$ and $\ABCd[\star]$, one would find $\bar{u}$ to satisfy
\begin{align}
\label{equilibrium_eq}
& \left. \begin{matrix}
\hspace*{-5pt}\text{in discr. time} & \hspace*{-5pt}\bar{x}\\
\hspace*{-5pt}\text{in cont. time} & \hspace*{-5pt}0
\end{matrix}\right\} \!\! = \! A_\star \bar{x} + B_\star \bar{u} + C_\star (I_m \otimes \bar{x}) \bar u + d_\star.
\end{align}
In a data-based setting, the ``true'' $\ABCd[\star]$ are not available, but one could still perform a(n additional) static experiment on the plant to find a $\bar{u}$ such that the constant (measured) solution $x(\cdot) = \bar{x}$ satisfies \eqref{equilibrium_eq}.
Such static experiments that aim at characterizing the \emph{static} relation between $\bar{u}$ and $\bar{x}$ are common and can be relatively cheap.
If such an additional experiment can be performed, we can assume that we know a $\bar{u}$ such that $(\bar{x},\bar{u})$ is an equilibrium point for~\eqref{sys_alt}.
Another case when we know such $\bar{u}$ is when $d_\star =  \bar{x} = 0$, which is typically considered in the literature of control design for bilinear system.
Indeed, for $d_\star = \bar{x} = 0$, a $\bar{u}$ satisfying \eqref{equilibrium_eq} for any unknown $\ABC[\star]$ is $\bar{u} = 0$.
For these reasons, it is relevant to first address the case of knowing $\bar{u}$ such that $(\bar{x},\bar{u})$ is an equilibrium for~\eqref{sys_alt} in Section~\ref{sec:known_u_bar}.
We then address the case when $\bar{u}$ is \emph{unknown} and thus treated as a design parameter in Section~\ref{sec:UNknown_u_bar}.
This will entail that we cannot make $\bar{x}$ locally asymptotically stable, but only a ``small'' neighborhood of it.
Incidentally, the solution in Section~\ref{sec:known_u_bar} improves upon the results in \cite{bisoffi2020bilinear}, as we discuss below in Remark~\ref{remark:comparison}.

We conclude this section by reformulating dynamics and set of dynamics consistent with data.
For all $\bar{x}$ and $\bar{u}$ in~\eqref{control_law}, the feedback interconnection of \eqref{sys_alt} and \eqref{control_law} yields
\begin{align}
& x^\circ = A_\star x + B_\star \big( K (x - \bar{x}) + \bar{u} \big) \notag \\
& + C_\star (I_m \otimes x) \big( K (x - \bar{x}) + \bar{u} \big) + d_\star.
\label{sys+control_law}
\end{align}
For later analysis, it is useful to consider the change of coordinates $\tilde{x} := x - \bar{x}$ and rewrite \eqref{sys+control_law} equivalently as
\begin{align}
& \left. \begin{matrix}
\text{in discr. time} & \tilde{x}^+ + \bar{x}\\
\text{in cont. time} & \dot{\tilde{x}}
\end{matrix}\right\} = \notag \\
& = \ABC[\star] \smat{I\\ K\\ (I_m \otimes \bar{x})K + \bar{u} \otimes I_n + (I_m \otimes \tilde{x}) K} \tilde{x} \notag \\
& + \ABCd[\star]\smat{\bar{x}\\ \bar{u}\\ (I_m \otimes \bar{x}) \bar{u}\\ 1}
\label{sys_tilde}
\end{align}
since $(I_m \otimes \tilde x) \bar{u}  = \bar{u} \otimes \tilde x = (\bar{u} \cdot 1) \otimes (I_n \tilde x) = ( \bar{u} \otimes I_n) \tilde x$ for all $\tilde{x}$ and $\bar{u}$.

The set $\mathcal{C}$, which is key in the sequel, can be reformulated in the same way as in \cite[\S 2.3]{bisoffi2022petersen}.
By algebraic computations,
\begin{subequations}
\label{set C form 1}
\begin{align}
& \mathcal{C} = \{ \ABCd = Z^\top \colon \bmat{I_n & Z^\top} \bmat{\mb{C} & \mb{B}^\top\\ \mb{B} & \mb{A} } \bmat{I_n\\ Z} \preceq 0 \} \label{set C form 1:set only}\\
& \mb{C} := X_1 X_1^\top-\Xi\Xi^\top, \, \mb{B} := - W_0 X_1^\top,\, \mb{A} := W_0 W_0^\top
\label{set C form 1:ABC}
\end{align}
\end{subequations}
and also, since $\mb{A} \succ 0$ by Assumption~\ref{ass:pers ext-bil}, to
\begin{subequations}
\label{set C form 2}
\begin{align}
& \!\!\!\mathcal{C} \!=\! \big\{ \ABCd = Z^\top \! \colon (Z-\Zc)^\top \mb{A} (Z - \Zc ) \preceq \mb{Q} \big\} \\
& \!\!\!\Zc := - \mb{A}^{-1} \mb{B},\, \mb{Q} := \mb{B}^\top \mb{A}^{-1} \mb{B} - \mb{C}. \label{set C form 2:Zc and Q}
\end{align}
\end{subequations}
A last form of $\mathcal{C}$ and its relevant properties are given next.
\begin{lemma}
Under Assumption~\ref{ass:pers ext-bil}, we have: $\mb{A} \succ 0$, $\mb{Q} \succeq 0$, $\mathcal{C}$ is bounded with respect to any matrix norm, and 
\begin{equation}
\label{set C form 3}
\mathcal{C} = \big\{ (\Zc+ \mb{A}^{-1/2} \Upsilon \mb{Q}^{1/2})^\top \colon \|\Upsilon\| \le 1 \big\}.
\end{equation}
\end{lemma}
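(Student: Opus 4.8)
The plan is to work from the second representation~\eqref{set C form 2}, already derived in the excerpt, so that $\mathcal{C} = \{ \ABCd = Z^\top \colon (Z - \Zc)^\top \mb{A}(Z-\Zc) \preceq \mb{Q}\}$ with $\mb{A} = W_0 W_0^\top$, $\Zc = -\mb{A}^{-1}\mb{B}$, and $\mb{Q} = \mb{B}^\top \mb{A}^{-1}\mb{B} - \mb{C}$. The positive definiteness $\mb{A} \succ 0$ is immediate: Assumption~\ref{ass:pers ext-bil} gives $W_0$ full row rank, so $v^\top \mb{A} v = \| W_0^\top v\|^2 > 0$ for every $v \ne 0$. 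This makes $\mb{A}^{1/2}$ and $\mb{A}^{-1/2}$ well defined, which I use throughout.

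For $\mb{Q} \succeq 0$ I would invoke nonemptiness of $\mathcal{C}$. By the standing noise assumption $E_0 \in \mathcal{E}$ in~\eqref{set_E}, and the excerpt notes this is equivalent to $\ABCd[\star] \in \mathcal{C}$; hence there is at least one $Z_\star$ with $(Z_\star - \Zc)^\top \mb{A}(Z_\star - \Zc) \preceq \mb{Q}$. Since the left-hand side is a congruence of $\mb{A} \succ 0$ and thus positive semidefinite, it follows that $\mb{Q} \succeq 0$, so $\mb{Q}^{1/2}$ is well defined as well.

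The core step is the characterization~\eqref{set C form 3}. I would change variables via $M := \mb{A}^{1/2}(Z - \Zc)$, a bijection since $\mb{A}^{1/2}$ is invertible, under which $(Z-\Zc)^\top \mb{A}(Z-\Zc) = M^\top M$, so membership in $\mathcal{C}$ becomes $M^\top M \preceq \mb{Q}$, and the target set~\eqref{set C form 3} becomes $\{M = \Upsilon \mb{Q}^{1/2} \colon \|\Upsilon\| \le 1\}$. One inclusion is a one-line computation: if $M = \Upsilon \mb{Q}^{1/2}$ with $\|\Upsilon\|\le 1$, then $M^\top M = \mb{Q}^{1/2}\Upsilon^\top \Upsilon \mb{Q}^{1/2} \preceq \mb{Q}^{1/2}\mb{Q}^{1/2} = \mb{Q}$, using $\Upsilon^\top \Upsilon \preceq \|\Upsilon\|^2 I \preceq I$. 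For the converse, given $M^\top M \preceq \mb{Q}$, I would set $\Upsilon := M (\mb{Q}^{1/2})^+$ with $(\cdot)^+$ the Moore--Penrose pseudoinverse, and verify that $\Upsilon \mb{Q}^{1/2} = M$ and $\|\Upsilon\| \le 1$.

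The hard part, and the only place where possible singularity of $\mb{Q}$ matters, is exactly these two verifications. For the first, $M^\top M \preceq \mb{Q}$ forces $\ker \mb{Q} \subseteq \ker M$ (if $\mb{Q} v = 0$ then $0 \le v^\top M^\top M v \le v^\top \mb{Q} v = 0$), so $M$ annihilates $\ker \mb{Q} = \ker \mb{Q}^{1/2} = \operatorname{range}(I - P)$, where $P := (\mb{Q}^{1/2})^+ \mb{Q}^{1/2}$ is the orthogonal projector onto $\operatorname{range}\mb{Q}^{1/2}$; hence $\Upsilon \mb{Q}^{1/2} = M P = M$. For the second, since $(\mb{Q}^{1/2})^+$ is symmetric and $M^\top M \preceq \mb{Q} = \mb{Q}^{1/2}\mb{Q}^{1/2}$, the congruence gives $\Upsilon^\top \Upsilon = (\mb{Q}^{1/2})^+ M^\top M (\mb{Q}^{1/2})^+ \preceq (\mb{Q}^{1/2})^+ \mb{Q} (\mb{Q}^{1/2})^+ = P \preceq I$, so $\|\Upsilon\| \le 1$. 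Finally, boundedness of $\mathcal{C}$ in any matrix norm then follows directly from~\eqref{set C form 3}: every element obeys $\|Z\| \le \|\Zc\| + \|\mb{A}^{-1/2}\|\,\|\mb{Q}^{1/2}\|$ by submultiplicativity and $\|\Upsilon\|\le 1$, and all matrix norms are equivalent in finite dimension.
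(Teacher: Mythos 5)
Your proof is correct. The paper itself gives no explicit argument for this lemma --- it merely instructs the reader to repeat the proofs of Lemma~1, Lemma~2 and Proposition~1 of \cite{bisoffi2022petersen} after substituting $W_0$, $\Xi\Xi^\top$, $\ABCd[\star]$ --- and your derivation ($\mb{A} \succ 0$ from full row rank of $W_0$, $\mb{Q} \succeq 0$ from nonemptiness of $\mathcal{C}$ via the standing assumption $E_0 \in \mathcal{E}$, the change of variables $M = \mb{A}^{1/2}(Z - \Zc)$ with the contraction parametrization of $\{M \colon M^\top M \preceq \mb{Q}\}$ handled through the Moore--Penrose pseudoinverse so that singular $\mb{Q}$ is covered, and the bound $\|Z\| \le \|\Zc\| + \|\mb{A}^{-1/2}\|\,\|\mb{Q}^{1/2}\|$, which coincides with the quantity $\bar{z}_{\|\|}$ the paper reuses in the proof of Lemma~3) is exactly the self-contained version of that outsourced argument.
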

\begin{proof}
Repeat the proofs of \cite[Lemma~1]{bisoffi2022petersen}, \cite[Lemma~2]{bisoffi2022petersen}, \cite[Prop.~1]{bisoffi2022petersen} after replacing $\smat{X_0 \\ U_0}$, $\Delta \Delta^\top$, $\AB[\star]$ with $W_0$, $\Xi \Xi^\top$, $\ABCd[\star]$, respectively.
\end{proof}

\section{Data-driven setpoint control with known equilibrium input}
\label{sec:known_u_bar}

In this section we address Problem~\ref{problem:statement} when we know a value $\bar{u}$ such that $(\bar{x},\bar{u})$ is an equilibrium for~\eqref{sys_alt}, i.e., $(\bar{x},\bar{u})$ satisfies \eqref{equilibrium_eq}.
In continuous time, \eqref{sys_tilde} simplifies, by~\eqref{equilibrium_eq}, to
\begin{align}
& \dot{\tilde x} = 
\ABC[\star] \smat{I\\ K\\ (I_m \otimes \bar{x}) K + \bar{u} \otimes I_n + (I_m \otimes \tilde{x}) K} \tilde{x}. \label{sys_ct_tilde_known_u_bar}
\end{align}
The matrix-ellipsoid parametrization in \eqref{set C form 3} along with the result known as Petersen's lemma, see \cite{bisoffi2022petersen}, allows us to obtain the next result for data-driven control of the bilinear system in~\eqref{sys_alt}, ensuring local asymptotic stability of $\bar{x}$ with a guaranteed basin of attraction.
\begin{theorem}
\label{thm:known_u_bar:ct}
Under Assumption~\ref{ass:pers ext-bil}, let data $X_1$, $X_0$, $U_0$, $S_0$ in \eqref{data_mat} yield $\mb{A}$, $\Zc$, $\mb{Q}$ in \eqref{set C form 1:ABC}, \eqref{set C form 2:Zc and Q} and let a $\bar{u}$ 
satisfying~\eqref{equilibrium_eq} be known.
Suppose the next program is feasible
\begin{subequations}
\label{sol-bil:ct}
\begin{align}
& \text{find} & & Y,~P=P^\top \succ 0,~\lambda >0,~\Lambda >0 \label{sol-bil:find:ct}\\
& \text{s.t.}  & & 
\left[
\begin{matrix}
\Tr\left\{ \smat{P\\ Y \\ (I_m \otimes \bar{x}) Y + (\bar{u} \otimes I_n)P \\ 0}^\top  \Zc   \right\} & \star \\
\smat{0\\0 \\ I_m \otimes P\\ 0}^\top \Zc & - \lambda (I_m \otimes P) \\
\lambda Y & 0  \\
\mb{A}^{-1/2} \smat{P\\ Y \\ (I_m \otimes \bar{x}) Y + (\bar{u} \otimes I_n)P \\ 0} & \mb{A}^{-1/2} \smat{0\\0 \\ I_m \otimes P \\ 0} \\
\Lambda \mb{Q}^{1/2} & 0 
\end{matrix}
\right.\notag \\
& & & \left.
\begin{matrix}
\dots & \star & \star & \star \\
\dots & \star & \star & \star \\
\dots & - \lambda I & \star & \star\\
\dots & 0 & - \Lambda I & \star \\
\dots & 0 & 0 & - \Lambda I
\end{matrix}
\right]\prec 0 \label{sol-bil:qlmi:ct}
\end{align}
\end{subequations}
and let $P$ and $K = Y P^{-1}$ be a solution to it.
For
\begin{align*}
\dot x = A_\star x + B_\star u + C_\star (I_m \otimes x) u + d_\star, u = K(x - \bar{x}) + \bar{u},
\end{align*}
$\bar{x}$ is locally asymptotically stable with basin of attraction including $
\mathcal{B}_{P}^{\bar{x}} :=\{ x \in \real^n \colon (x-\bar{x})^\top P^{-1} (x-\bar{x}) \le 1 \}$.
\end{theorem}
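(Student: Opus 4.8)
The plan is to certify the result with the quadratic Lyapunov function $V(x) = (x-\bar{x})^\top P^{-1}(x-\bar{x})$, whose unit sublevel set is exactly $\mathcal{B}_{P}^{\bar{x}}$. In the error coordinate $\tilde{x} = x - \bar{x}$ the closed loop reads $\dot{\tilde{x}} = \mathcal{A}(\tilde{x})\tilde{x}$ with $\mathcal{A}(\tilde{x}) := \ABC[\star]\smat{I\\ K\\ (I_m\otimes\bar{x})K + \bar{u}\otimes I_n + (I_m\otimes\tilde{x})K}$ from \eqref{sys_ct_tilde_known_u_bar}, so that $\dot{V} = \tilde{x}^\top(P^{-1}\mathcal{A}(\tilde{x}) + \mathcal{A}(\tilde{x})^\top P^{-1})\tilde{x}$. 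I would first reduce the statement to showing $\Tr\{\mathcal{A}(\tilde{x})P\}\prec 0$ for every $\tilde{x}$ with $\tilde{x}^\top P^{-1}\tilde{x}\le 1$ and for the true, unknown $\ABC[\star]$. This matrix inequality is obtained from $\dot{V}<0$ by the congruence with $P$ and the substitution $Y=KP$, and it renders $\dot{V}<0$ for all nonzero $\tilde{x}\in\mathcal{B}_{P}^{\bar{x}}$; since $V$ is positive definite and $\mathcal{B}_{P}^{\bar{x}}$ is a sublevel set of $V$, standard Lyapunov arguments then yield forward invariance of $\mathcal{B}_{P}^{\bar{x}}$, convergence $\tilde{x}\to 0$ from every initial condition in it, and hence local asymptotic stability of $\bar{x}$ with $\mathcal{B}_{P}^{\bar{x}}$ in the basin of attraction. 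After the congruence, the state-independent part of $\mathcal{A}(\tilde{x})P$ equals $\ABCd[\star]F$ with $F := \smat{P\\ Y\\ (I_m\otimes\bar{x})Y + (\bar{u}\otimes I_n)P\\ 0}$, the appended zero block discarding $d_\star$.

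The second step eliminates the dependence of $\mathcal{A}(\tilde{x})P = \ABCd[\star]F + C_\star(I_m\otimes\tilde{x})Y$ on the state. Here I would use that $\tilde{x}^\top P^{-1}\tilde{x}\le 1$ implies $\tilde{x}\tilde{x}^\top\preceq P$, hence $I_m\otimes\tilde{x}\tilde{x}^\top\preceq I_m\otimes P$, together with Young's inequality $\Tr\{C_\star(I_m\otimes\tilde{x})Y\}\preceq\lambda^{-1}C_\star(I_m\otimes\tilde{x}\tilde{x}^\top)C_\star^\top + \lambda Y^\top Y\preceq \lambda^{-1}C_\star(I_m\otimes P)C_\star^\top + \lambda Y^\top Y$ for a scalar $\lambda>0$. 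Keeping the two over-bounding terms implicit as Schur complements against the diagonal blocks $-\lambda(I_m\otimes P)$ and $-\lambda I$ reproduces exactly rows and columns $2$ and $3$ of \eqref{sol-bil:qlmi:ct}; in particular $C_\star(I_m\otimes P)C_\star^\top$ is carried through the off-diagonal block built from $H := \smat{0\\ 0\\ I_m\otimes P\\ 0}$, which is what keeps $C_\star$ entering the inequality linearly.

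The third step removes the unknown $\ABC[\star]$ via the matrix-ellipsoid parametrization \eqref{set C form 3}: each consistent datum satisfies $\ABCd = (\Zc+\mb{A}^{-1/2}\Upsilon\mb{Q}^{1/2})^\top$ for some $\Upsilon$ with $\|\Upsilon\|\le 1$, and since the true matrices lie in $\mathcal{C}$ it suffices to enforce the inequality for all such $\Upsilon$. Writing $Z = \Zc+\mb{A}^{-1/2}\Upsilon\mb{Q}^{1/2}$ makes the inequality affine in $\Upsilon$: the nominal terms $\Tr\{F^\top\Zc\}$ and $H^\top\Zc$ supply the $(1,1)$ and $(2,1)$ blocks, while the $\Upsilon$-dependent terms appear through the factors $\mb{A}^{-1/2}F$, $\mb{A}^{-1/2}H$ and $\mb{Q}^{1/2}$. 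Requiring negativity for all $\|\Upsilon\|\le 1$ is then discharged by the matrix-multiplier Petersen's lemma of \cite{bisoffi2022petersen}, which introduces the multiplier $\Lambda>0$ and the blocks $\mb{A}^{-1/2}F$, $\mb{A}^{-1/2}H$, $\Lambda\mb{Q}^{1/2}$ against the two $-\Lambda I$ terms, i.e. rows and columns $4$ and $5$. Collecting the nominal part, the Young blocks, and the Petersen blocks gives precisely \eqref{sol-bil:qlmi:ct}, so its feasibility is sufficient for the targeted inequality and thus for the claim.

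I expect the main obstacle to be the interaction of the second and third steps: Young's inequality generates $C_\star(I_m\otimes P)C_\star^\top$, which is quadratic in the unknown $C_\star$ and therefore incompatible with the affine-in-$\Upsilon$ structure that Petersen's lemma requires. The remedy is exactly to keep this term unexpanded as a Schur complement through $H$, so that $C_\star$, equivalently $\Upsilon$, still enters linearly when Petersen's lemma is invoked; making the block bookkeeping and the two multipliers $\lambda$ and $\Lambda$ consistent across the two lemmas is the delicate part. One should also verify that the single bound $\tilde{x}\tilde{x}^\top\preceq P$ dominates the state dependence uniformly over $\mathcal{B}_{P}^{\bar{x}}$, which is precisely what confines the guarantee to this ellipsoid rather than the whole state space.
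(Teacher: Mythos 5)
Your proposal is correct and follows essentially the same route as the paper's proof: change of coordinates, congruence with $P$ and $Y=KP$, decoupling the state-dependent bilinear term over the ellipsoid via $\tilde{x}\tilde{x}^\top \preceq P$, then the matrix-ellipsoid parametrization \eqref{set C form 3} of $\mathcal{C}$ discharged by Petersen's lemma with multiplier $\Lambda$, and Schur complements to assemble \eqref{sol-bil:qlmi:ct}. Your only deviation is cosmetic: where you invoke Young's inequality with multiplier $\lambda$ to absorb $C_\star(I_m\otimes\tilde{x})Y$, the paper applies Petersen's lemma a first time (over all $V$ with $V^\top V \preceq I_m\otimes P$), whose sufficiency direction is exactly that Young bound, so the two arguments coincide where it matters.
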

\begin{proof}
Since $\bar{u}$ satisfies \eqref{equilibrium_eq} and by the change of coordinates $\tilde{x} := x - \bar{x}$, the statement becomes that for~\eqref{sys_ct_tilde_known_u_bar}, the origin is locally asymptotically stable with basin of attraction including
$\mathcal{B}_P:=\{ \tilde x \in \real^n \colon \tilde x^\top P^{-1} \tilde x \le 1 \}$.
To prove this, we need to show that, for $P=P^\top \succ 0$, the derivative of the Lyapunov function $\tilde{x}^\top P^{-1} \tilde{x}$ along solutions to~\eqref{sys_ct_tilde_known_u_bar} satisfies for all $\tilde{x} \in \mathcal{B}_P \backslash \{0\}$
\begin{align*}
& \!\Tr \{ \tilde{x}^\top P^{-1} \ABC[\star] \! \smat{I\\ K\\ (I_m \otimes \bar{x}) K + \bar{u} \otimes I_n + (I_m \otimes \tilde{x}) K} \! \tilde{x} \} \! < \! 0. 
\end{align*}
Since $\ABC[\star]$ is unknown to us, we show instead that for all $\ABCd \in \mathcal{C}$ and all $\tilde{x} \in \mathcal{B}_P \backslash \{ 0\}$,
\begin{align}
& \Tr \{ \tilde{x}^\top P^{-1} \ABC \smat{I\\ K\\ (I_m \otimes \bar{x}) K + \bar{u} \otimes I_n + (I_m \otimes \tilde{x}) K} \tilde{x} \} < 0.
\label{bil-decrease-Lyap:ct}
\end{align}
This holds if for all $\ABCd \in \mathcal{C}$, all $\tilde{x} \in \mathcal{B}_P \backslash \{0\}$ and all $\tilde{y} \in \mathcal{B}_P$,
\begin{align*}
& \Tr \{ \tilde{x}^\top P^{-1} \ABC \smat{I\\ K\\ (I_m \otimes \bar{x}) K + \bar{u} \otimes I_n + (I_m \otimes \tilde{y}) K} \tilde{x} \} < 0. 
\end{align*}
This holds if and only if for all $\ABCd \in \mathcal{C}$ and all $\tilde{y} \in \mathcal{B}_P$,
\begin{align*}
& \Tr \{ P^{-1} \ABC \smat{I\\ K\\ (I_m \otimes \bar{x}) K + \bar{u} \otimes I_n + (I_m \otimes \tilde{y}) K} \} \prec 0.
\end{align*}
By pre\slash post-multiplying by $P \succ 0$ and setting $K P = Y$, this holds if and only if for all $\ABCd \in \mathcal{C}$ and all $\tilde{y} \in \mathcal{B}_P$,
\begin{align*}
& 0 \succ \Tr \{ \ABC \smat{P\\ Y\\ (I_m \otimes \bar{x}) Y + (\bar{u} \otimes I_n)P + (I_m \otimes \tilde{y}) Y} \} \\
& = \Tr \{ \ABC \smat{P\\ Y\\ (I_m \otimes \bar{x}) Y + (\bar{u} \otimes I_n)P } \} \\
& +  \Tr \{ \ABC \smat{0\\ 0\\ I_{mn} } (I_m \otimes \tilde{y}) Y \}.
\end{align*}
By $P \succ 0$, $\tilde{y} \tilde{y}^\top \preceq P$ is the same as $\tilde{y} \in \mathcal{B}_P$ and $\tilde{y} \tilde{y}^\top \preceq P$ is equivalent to $(I_m \otimes \tilde{y})(I_m \otimes \tilde{y})^\top \preceq I_m \otimes P$.
Then, the previous condition holds if
\footnote{
We note why this is only an implication and not an equivalence.
The implication corresponds to the fact that $\mathcal{S}_1 := \{ V = I_m \otimes \tilde{y}^\top \colon \tilde{y} \in \real^n, \tilde{y} \tilde{y}^\top \preceq P \} \subseteq \{ V \in \real^{m \times mn } \colon V^\top V \preceq I_m \otimes P \} =: \mathcal{S}_2$ since $(I_m \otimes \tilde{y})(I_m \otimes \tilde{y})^\top = I_m \otimes (\tilde{y} \tilde{y}^\top) \preceq I_m \otimes P$ by $\tilde{y} \tilde{y}^\top \preceq P$.
On the other hand, there are counterexamples to $\mathcal{S}_1 \supseteq \mathcal{S}_2$ so that the equivalence does not hold.
Indeed, for $m=n=2$ and $P = \smat{p_{11} & p_{12}\\ p_{12} & p_{22}} \succ 0$, take $v = \frac{1}{2}(p_{11} - \frac{p_{12}^2}{p_{22}}) > 0$ and $V= \smat{0 & 0 & 0 & 0\\ \sqrt{v} & 0 & 0 & 0}$; $V$ satisfies $V^\top V  = \smat{v & 0 & 0 & 0\\ 0 & 0 & 0 & 0\\ 0 & 0 & 0 & 0\\ 0 & 0 & 0 & 0} \preceq I_2 \otimes P$ and thus $V \in \mathcal{S}_2$; however, $V \notin \mathcal{S}_1$ since it cannot be written, for $\tilde{y} = \smat{\tilde{y}_1\\ \tilde{y}_2}$, as $\smat{\tilde{y}_1 & \tilde{y}_2 & 0 & 0\\ 0 & 0 & \tilde{y}_1 & \tilde{y}_2}$.}
for all $\ABCd \in \mathcal{C}$ and for all $V$ with $V^\top V \preceq I_m  \otimes P$,
\begin{align*}
& 0 \succ \Tr \{ \ABC \smat{P\\ Y\\ (I_m \otimes \bar{x}) Y + (\bar{u} \otimes I_n)P } \} \\
& +  \Tr \{ \ABC \smat{0\\ 0\\ I_{mn} } V^\top Y \}.
\end{align*}
By Petersen's lemma as reported in \cite[Fact~1]{bisoffi2022petersen} and $P \succ 0$, this holds if and only if for all $\ABCd \in \mathcal{C}$, there exists $\lambda > 0$ such that
\begin{align*}
& 0 \succ \Tr \{ \ABC \smat{P\\ Y\\ (I_m \otimes \bar{x}) Y + (\bar{u} \otimes I_n)P } \}  + \lambda Y^\top Y \\
& +  \frac{1}{\lambda} \ABC \smat{0\\ 0\\ I_{mn} } (I_m \otimes P) \smat{0\\ 0\\ I_{mn} }^\top \ABC^\top.
\end{align*}
By Schur complement, $\lambda > 0$ and $P \succ 0$, this holds if and only if for all $\ABCd \in \mathcal{C}$, there exists $\lambda >0$ such that
\begin{align*}
&0 \succ  \left[
\begin{matrix}
\Tr\left\{ \smat{P\\ Y \\ (I_m \otimes \bar{x}) Y + (\bar{u} \otimes I_n)P \\ 0}^\top \ABCd^\top  \right\} \\
\smat{0\\0 \\ I_m \otimes P\\ 0}^\top \ABCd^\top \\
\lambda Y 
\end{matrix}
\right. \\
& \hspace*{4cm} \left.
\begin{matrix}
\dots & \star & \star \\
\dots & - \lambda (I_m \otimes P) & \star \\
\dots & 0 & - \lambda I
\end{matrix}
\right].
\end{align*}
By the parametrization of $\mathcal{C}$ in~\eqref{set C form 3} and algebraic computations, this holds if and only if for all $\Upsilon$ with $\| \Upsilon \| \le 1$, there exists $\lambda > 0$ such that
\begingroup
\setlength\arraycolsep{2pt}
\begin{align*}
0 & \succ 
\bmat{ \Tr\left\{ \smat{P\\ Y \\ (I_m \otimes \bar{x}) Y + (\bar{u} \otimes I_n)P \\ 0}^\top \Zc \right\} & \star & \star \\
\smat{0\\0 \\ I_m \otimes P\\ 0}^\top \Zc & - \lambda (I_m \otimes P) & \star \\
\lambda Y & 0 & - \lambda I} \\
& +
\Tr 
\left\{
\bmat{
\smat{P\\ Y \\ (I_m \otimes \bar{x}) Y + (\bar{u} \otimes I_n)P \\ 0}^\top \\
\smat{0\\0 \\ I_m \otimes P \\ 0}^\top\\
0
}
\mb{A}^{-1/2} \Upsilon \mb{Q}^{1/2}
\bmat{I & 0 & 0}
\right\}.
\end{align*}
\endgroup
By using a common $\lambda >0$, this holds if there exists $\lambda >0$ such that for all $\Upsilon$ with $\| \Upsilon \| \le 1$, the previous matrix inequality is satisfied.
By Petersen's lemma, this holds if and only if there exist $\lambda > 0$ and $\Lambda > 0$ such that
\begingroup
\setlength\arraycolsep{2pt}
\begin{align*}
& 0 \succ 
\bmat{ \Tr\left\{ \smat{P\\ Y \\ (I_m \otimes \bar{x}) Y + (\bar{u} \otimes I_n)P \\ 0}^\top \Zc \right\} & \star & \star \\
\smat{0\\0 \\ I_m \otimes P\\ 0}^\top \Zc & - \lambda (I_m \otimes P) & \star \\
\lambda Y & 0 & - \lambda I} \\
& +
\frac{1}{\Lambda}
\bmat{
\!\smat{P\\ Y \\ (I_m \otimes \bar{x}) Y + (\bar{u} \otimes I_n)P \\ 0}^\top \\
\smat{0\\0 \\ I_m \otimes P \\ 0}^\top\!\\
0
}
\!\!\!\mb{A}^{-1}\!\!\!
\bmat{
\!\smat{P\\ Y \\ (I_m \otimes \bar{x}) Y + (\bar{u} \otimes I_n)P \\ 0}^\top \\
\smat{0\\0 \\ I_m \otimes P \\ 0}^\top\!\\
0
}^\top \\
& +
\Lambda
\smat{I\\ 0\\ 0}
\mb{Q}^{1/2}  \mb{Q}^{1/2}
\smat{I\\ 0\\ 0}^\top.
\end{align*}
\endgroup
By Schur complement and $\Lambda > 0$, this matrix inequality is equivalent to \eqref{sol-bil:qlmi:ct}.
In summary, we have shown that \eqref{bil-decrease-Lyap:ct} holds if \eqref{sol-bil:ct} is feasible and the statement is proven.
\end{proof}

In Theorem~\ref{thm:known_u_bar:ct}, we have the products $\lambda Y$ and $\lambda (I_m \otimes P)$ between decision variables.
This is commonly dealt with by a line search  in the scalar variable $\lambda$ and, for fixed $\lambda$, \eqref{sol-bil:qlmi} corresponds to a linear matrix inequality.

\subsection{Discrete time}

In discrete time, \eqref{sys_tilde} simplifies, by~\eqref{equilibrium_eq}, to
\begin{align}
& \tilde x^+ = 
\ABC[\star] \smat{I\\ K\\ (I_m \otimes \bar{x}) K + \bar{u} \otimes I_n + (I_m \otimes \tilde{x}) K} \tilde{x}. \label{sys_dt_tilde_known_u_bar}
\end{align}
We state the discrete-time counterpart of Theorem~\ref{thm:known_u_bar:ct} next.

\begin{theorem}
\label{thm:known_u_bar:dt}
Under Assumption~\ref{ass:pers ext-bil}, let data $X_1$, $X_0$, $U_0$, $S_0$ in \eqref{data_mat} yield $\mb{A}$, $\Zc$, $\mb{Q}$ in \eqref{set C form 1:ABC}, \eqref{set C form 2:Zc and Q} and let a $\bar{u}$ satisfying~\eqref{equilibrium_eq} be known.
Suppose the next program is feasible
\begin{subequations}
\label{sol-bil}
\begingroup
\setlength\arraycolsep{2.pt}
\begin{align}
& \text{find} & & Y,~P=P^\top \succ 0,~\lambda >0,~\Lambda >0 \label{sol-bil:find}\\
& \text{s.t.}  & & \hspace*{-10pt}
\left[
\begin{matrix}
-P & \star & \\
\Zc^\top \smat{P\\ Y \\ (I_m \otimes \bar{x}) Y + (\bar{u} \otimes I_n)P \\ 0} & -P  & \\
0 & \smat{0\\0 \\ I_m \otimes P\\ 0}^\top \Zc   & \\
\lambda Y & 0   & \\
\mb{A}^{-1/2} \smat{P\\ Y \\ (I_m \otimes \bar{x}) Y + (\bar{u} \otimes I_n)P \\ 0} & 0  & \\
0 & \Lambda \mb{Q}^{1/2}  & \\
\end{matrix}
\right. \notag \\
& & & \left.
\begin{matrix}
\dots & \star & \star & \star & \star \\
\dots & \star & \star & \star & \star \\
\dots & - \lambda (I_m \otimes P) & \star & \star & \star \\
\dots & 0 & - \lambda I & \star & \star \\
\dots & \mb{A}^{-1/2} \smat{0\\0 \\ I_m \otimes P \\ 0} & 0 & - \Lambda I & \star \\
\dots & 0 & 0 & 0 & - \Lambda I
\end{matrix}
\right]
\prec 0 \label{sol-bil:qlmi}
\end{align}
\endgroup
\end{subequations}
and let $P$ and $K = Y P^{-1}$ be a solution to it.
For
\begin{align*}
x^+ \!= A_\star x + B_\star u + C_\star (I_m \otimes x) u + d_\star, u = K(x - \bar{x}) + \bar{u},
\end{align*}
$\bar{x}$ is locally asymptotically stable with basin of attraction including $\mathcal{B}_P^{\bar{x}} :=\{ x \in \real^n \colon (x-\bar{x})^\top P^{-1} (x-\bar{x}) \le 1 \}$.
\end{theorem}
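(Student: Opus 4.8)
The plan is to follow the proof of Theorem~\ref{thm:known_u_bar:ct} step by step, the only structural novelty being that in discrete time the Lyapunov decrease is \emph{quadratic}, rather than linear, in the closed-loop matrix. As in the continuous-time case, since $\bar{u}$ satisfies \eqref{equilibrium_eq} and after the change of coordinates $\tilde{x}:=x-\bar{x}$, the claim reduces to showing that the origin of \eqref{sys_dt_tilde_known_u_bar} is locally asymptotically stable with basin of attraction including $\mathcal{B}_P:=\{\tilde{x}\in\real^n\colon \tilde{x}^\top P^{-1}\tilde{x}\le 1\}$. Taking $V(\tilde{x}):=\tilde{x}^\top P^{-1}\tilde{x}$, the sublevel set $\mathcal{B}_P=\{V\le 1\}$ is forward invariant and every trajectory starting in it converges to the origin as soon as $V(\tilde{x}^+)<V(\tilde{x})$ for all $\tilde{x}\in\mathcal{B}_P\backslash\{0\}$; this is the decrease I must establish, and since $\ABC[\star]$ is unknown I establish it for all $\ABCd\in\mathcal{C}$.

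Writing $M(\tilde{x}):=\ABC\smat{I\\ K\\ (I_m\otimes\bar{x})K+\bar{u}\otimes I_n+(I_m\otimes\tilde{x})K}$ so that $\tilde{x}^+=M(\tilde{x})\tilde{x}$, the decrease $V(\tilde{x}^+)-V(\tilde{x})<0$ reads $\tilde{x}^\top\big(M(\tilde{x})^\top P^{-1}M(\tilde{x})-P^{-1}\big)\tilde{x}<0$. As in the continuous-time proof I would overbound by freezing the inner $\tilde{x}$ that enters $(I_m\otimes\tilde{x})K$ to an independent $\tilde{y}\in\mathcal{B}_P$: it then suffices that $M(\tilde{y})^\top P^{-1}M(\tilde{y})-P^{-1}\prec0$ for all $\tilde{y}\in\mathcal{B}_P$ and all $\ABCd\in\mathcal{C}$, which by homogeneity in $\tilde{x}$ is the same as the quadratic form being negative on $\mathcal{B}_P\backslash\{0\}$. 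Pre- and post-multiplying by $P\succ0$ and setting $Y:=KP$ turns this into $W(\tilde{y})^\top P^{-1}W(\tilde{y})-P\prec0$ with $W(\tilde{y}):=\ABC\smat{P\\ Y\\ (I_m\otimes\bar{x})Y+(\bar{u}\otimes I_n)P+(I_m\otimes\tilde{y})Y}$.

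The key new step, absent in continuous time, is to linearize the quadratic term by a Schur complement: $W(\tilde{y})^\top P^{-1}W(\tilde{y})-P\prec0$ is equivalent to $\smat{-P & \star\\ W(\tilde{y}) & -P}\prec0$, where $W(\tilde{y})$ enters the off-diagonal block affinely in both $\tilde{y}$ and $\ABC$; this is the origin of the top-left $\smat{-P&\star\\ \cdot&-P}$ block of \eqref{sol-bil:qlmi}. From here the argument follows the continuous-time template. I would split off the $\tilde{y}$-dependent part $\ABC\smat{0\\0\\I_{mn}}(I_m\otimes\tilde{y})Y$ of $W(\tilde{y})$, use that $\tilde{y}\in\mathcal{B}_P$ is $\tilde{y}\tilde{y}^\top\preceq P$, equivalently $(I_m\otimes\tilde{y})(I_m\otimes\tilde{y})^\top\preceq I_m\otimes P$, and relax $(I_m\otimes\tilde{y})^\top$ to a generic $V$ with $V^\top V\preceq I_m\otimes P$ (the same one-sided relaxation flagged in the footnote to Theorem~\ref{thm:known_u_bar:ct}). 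The perturbation then enters symmetrically and affinely through $V^\top$ in the lower-left block (and its transpose in the upper-right), so a first application of Petersen's lemma \cite[Fact~1]{bisoffi2022petersen} eliminates $V$ at the price of a multiplier $\lambda>0$, producing the terms in $\lambda Y^\top Y$ and $\tfrac1\lambda(\cdots)(I_m\otimes P)(\cdots)^\top$ that are absorbed by Schur complements into the $\lambda$-blocks of \eqref{sol-bil:qlmi}.

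Finally, the residual uncertainty $\ABCd\in\mathcal{C}$ is handled exactly as before: substituting the parametrization \eqref{set C form 3} makes $\ABCd=(\Zc+\mb{A}^{-1/2}\Upsilon\mb{Q}^{1/2})^\top$ with $\|\Upsilon\|\le1$ appear affinely, a common $\lambda>0$ is used to pass from a ``for all $\Upsilon$, there exists $\lambda$'' to a ``there exists $\lambda$, for all $\Upsilon$'' statement, and a second application of Petersen's lemma introduces $\Lambda>0$ and the $\mb{A}^{-1/2}$- and $\mb{Q}^{1/2}$-blocks. A last round of Schur complements assembles the $6\times6$ block inequality \eqref{sol-bil:qlmi}, so that feasibility of \eqref{sol-bil} implies the required decrease and the claim follows. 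I anticipate the main obstacle to be purely bookkeeping: keeping the two sources of uncertainty (the frozen state $\tilde{y}$, through $V$, and the consistent dynamics $\ABCd$, through $\Upsilon$) in the correct off-diagonal positions as the extra discrete-time Lyapunov block $\smat{-P&\star\\ \cdot&-P}$ propagates through the successive Petersen/Schur steps, so that the ordering reproduces \eqref{sol-bil:qlmi} exactly rather than a permuted variant.
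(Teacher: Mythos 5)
Your proposal is correct and takes essentially the same route as the paper's own proof: the reduction via \eqref{equilibrium_eq} and $\tilde{x}:=x-\bar{x}$, freezing the inner state to $\tilde{y}\in\mathcal{B}_P$, the Schur-complement linearization of the quadratic discrete-time decrease into the $\smat{-P & \star\\ \cdot & -P}$ block, the one-sided relaxation $(I_m\otimes\tilde{y})\mapsto V$ with $V^\top V\preceq I_m\otimes P$, and the two applications of Petersen's lemma (multipliers $\lambda$, $\Lambda$, with a common $\lambda$ to swap quantifiers) are exactly the paper's steps. The only nominal slip is bookkeeping you already flagged: once the extra Lyapunov block is present, the first Petersen term is $\lambda\bmat{Y & 0}^\top\bmat{Y & 0}$ rather than $\lambda Y^\top Y$.
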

\begin{proof}
\begingroup
\setlength\arraycolsep{1.pt}
Since $\bar{u}$ satisfies \eqref{equilibrium_eq} and by the change of coordinates $\tilde{x} := x - \bar{x}$, the statement becomes that for~\eqref{sys_dt_tilde_known_u_bar}, the origin is locally asymptotically stable with basin of attraction including
$\mathcal{B}_P:=\{ \tilde x \in \real^n \colon \tilde x^\top P^{-1} \tilde x \le 1 \}$.
To prove this, we need to show that, for $P=P^\top \succ 0$, the update of the Lyapunov function $\tilde{x}^\top P^{-1} \tilde{x}$ along solutions to~\eqref{sys_dt_tilde_known_u_bar} satisfies for all $\tilde{x} \in \mathcal{B}_P \backslash \{0\}$
\begin{align*}
& [\star]^\top P^{-1} \cdot \ABC[\star] \smat{I\\ K\\ (I_m \otimes \bar{x}) K + \bar{u} \otimes I_n + (I_m \otimes \tilde{x}) K} \tilde{x} \\
& - \tilde{x}^\top P^{-1} \tilde{x} < 0.
\end{align*}
Since $\ABC[\star]$ is unknown to us, we show instead that for all $\ABCd \in \mathcal{C}$ and all $\tilde{x} \in \mathcal{B}_P \backslash \{ 0\}$,
\begin{align}
& [\star]^\top P^{-1} \cdot \ABC \smat{I\\ K\\ (I_m \otimes \bar{x}) K + \bar{u} \otimes I_n + (I_m \otimes \tilde{x}) K} \tilde{x} \notag \\
& - \tilde{x}^\top P^{-1} \tilde{x} < 0.
\label{bil-decrease-Lyap:dt}
\end{align}
This holds if for all $\ABCd \in \mathcal{C}$, all $\tilde{x} \in \mathcal{B}_P \backslash \{0\}$ and all $\tilde{y} \in \mathcal{B}_P$,
\begin{align*}
& [\star]^\top P^{-1} \cdot \ABC \smat{I\\ K\\ (I_m \otimes \bar{x}) K + \bar{u} \otimes I_n + (I_m \otimes \tilde{y}) K} \tilde{x} \notag \\
& - \tilde{x}^\top P^{-1} \tilde{x} < 0.
\end{align*}
This holds if and only if for all $\ABCd \in \mathcal{C}$ and all $\tilde{y} \in \mathcal{B}_P$,
\begin{align*}
& [\star]^\top P^{-1} \cdot \ABC \smat{I\\ K\\ (I_m \otimes \bar{x}) K + \bar{u} \otimes I_n + (I_m \otimes \tilde{y}) K} - P^{-1} \prec 0.
\end{align*}
By pre/post-multiplying by $P \succ 0$ and setting $K P = Y$, this holds if and only if for all $\ABCd \in \mathcal{C}$ and all $\tilde{y} \in \mathcal{B}_P$,
\begin{align*}
& 0 \succ [\star]^\top P^{-1} \cdot \ABC \smat{P\\ Y\\ (I_m \otimes \bar{x}) Y + (\bar{u} \otimes I_n)P + (I_m \otimes \tilde{y}) Y} - P
\end{align*}
or, equivalently,
\begin{align*}
0 & \succ \bmat{- P & \star\\ 
\ABC \smat{P\\ Y\\ (I_m \otimes \bar{x}) Y + (\bar{u} \otimes I_n)P + (I_m \otimes \tilde{y}) Y}  & - P} \\
& = \bmat{- P & \star\\ 
\ABC \smat{P\\ Y\\ (I_m \otimes \bar{x}) Y + (\bar{u} \otimes I_n)P }  & - P} \\
& +
\Tr \bigg\{
\bmat{0\\ \ABC \smat{0\\ 0\\ I_{mn}} } (I_m \otimes \tilde{y} ) \bmat{Y & 0}
\bigg\}
\end{align*}
by $P \succ 0$, a Schur complement and algebraic computations.
By $P \succ 0$, $\tilde{y} \tilde{y}^\top \preceq P$ is the same as $\tilde{y} \in \mathcal{B}_P$ and $\tilde{y} \tilde{y}^\top \preceq P$ is equivalent to $(I_m \otimes \tilde{y})(I_m \otimes \tilde{y})^\top \preceq I_m \otimes P$.
Then, the previous condition holds if for all $\ABCd \in \mathcal{C}$ and for all $V$ with $V^\top V \preceq I_m  \otimes P$,
\begin{align*}
& 0 \succ
\bmat{- P & \star\\ 
\ABC \smat{P\\ Y\\ (I_m \otimes \bar{x}) Y + (\bar{u} \otimes I_n)P }  & - P} \\
& +
\Tr \bigg\{
\bmat{0\\ \ABC \smat{0\\ 0\\ I_{mn}} } V^\top \bmat{Y & 0}
\bigg\}.
\end{align*}
By Petersen's lemma as reported in \cite[Fact~1]{bisoffi2022petersen} and $P \succ 0$, this holds if and only if for all $\ABCd \in \mathcal{C}$, there exists $\lambda > 0$ such that
\begin{align*}
& 0 \succ \bmat{- P & \star\\ 
\ABC \smat{P\\ Y\\ (I_m \otimes \bar{x}) Y + (\bar{u} \otimes I_n)P }  & - P}  + \lambda \bmat{Y & 0}^\top \bmat{Y & 0} \\
& +  \frac{1}{\lambda} \bmat{0\\ \ABC \smat{0\\ 0\\ I_{mn}} } (I_m \otimes P) \bmat{0\\ \ABC \smat{0\\ 0\\ I_{mn}} }^\top.
\end{align*}
By Schur complement, $\lambda > 0$ and $P \succ 0$, this holds if and only if for all $\ABCd \in \mathcal{C}$, there exists $\lambda >0$ such that
\begin{align*}
0 & \! \succ \!\!\!\!\!
\left[
\begin{matrix}
- P & \star \\
\ABCd \!\! \smat{P\\ Y\\ (I_m \otimes \bar{x}) Y + (\bar{u} \otimes I_n)P \\ 0}  & - P\\
0 & \smat{0\\0 \\ I_m \otimes P\\ 0}^\top \!\! \ABCd^\top \\
\lambda Y & 0 
\end{matrix}
\right. \\
& \left.
\begin{matrix}
\dots & \star & \star\\
\dots & \star & \star\\
\dots & - \lambda (I_m \otimes P) & \star \\
\dots & 0 & - \lambda I
\end{matrix}
\right]
.
\end{align*}
By the parametrization of $\mathcal{C}$ in~\eqref{set C form 3} and algebraic computations, this holds if and only if for all $\Upsilon$ with $\| \Upsilon \| \le 1$, there exists $\lambda > 0$ such that
\begin{align*}
0 & \succ 
\bmat{ 
-P & \star & \star & \star \\
\Zc^\top \smat{P\\ Y\\ (I_m \otimes \bar{x}) Y + (\bar{u} \otimes I_n)P \\ 0}  & - P & \star & \star \\
0 & \smat{0\\0 \\ I_m \otimes P\\ 0}^\top \Zc & - \lambda (I_m \otimes P) & \star \\
\lambda Y & 0 & 0 & - \lambda I} \\
& +
\Tr 
\left\{
\bmat{
\smat{P\\ Y \\ (I_m \otimes \bar{x}) Y + (\bar{u} \otimes I_n)P \\ 0}^\top \\
0\\
\smat{0\\0 \\ I_m \otimes P \\ 0}^\top\\
0
}
\mb{A}^{-1/2} \Upsilon \mb{Q}^{1/2}
\bmat{0 & I & 0 & 0}
\right\}.
\end{align*}
By using a common $\lambda >0$, this holds if there exists $\lambda >0$ such that for all $\Upsilon$ with $\| \Upsilon \| \le 1$, the previous matrix inequality is satisfied.
By Petersen's lemma, this holds if and only if there exist $\lambda > 0$ and $\Lambda > 0$ such that
\begin{align*}
& 0 \succ
\bmat{ 
-P & \star & \star & \star \\
\Zc^\top \smat{P\\ Y\\ (I_m \otimes \bar{x}) Y + (\bar{u} \otimes I_n)P \\ 0}  & - P & \star & \star \\
0 & \smat{0\\0 \\ I_m \otimes P\\ 0}^\top \Zc & - \lambda (I_m \otimes P) & \star \\
\lambda Y & 0 & 0 & - \lambda I} \\
& +
\frac{1}{\Lambda}
\bmat{
\smat{P\\ Y \\ (I_m \otimes \bar{x}) Y + (\bar{u} \otimes I_n)P \\ 0}^\top \\
0\\
\smat{0\\0 \\ I_m \otimes P \\ 0}^\top\\
0
}
\!\!\!\mb{A}^{-1}\!\!\!
\bmat{
\smat{P\\ Y \\ (I_m \otimes \bar{x}) Y + (\bar{u} \otimes I_n)P \\ 0}^\top \\
0\\
\smat{0\\0 \\ I_m \otimes P \\ 0}^\top\\
0
}^\top \\
& +
\Lambda
\smat{0\\ I\\ 0\\ 0}
\mb{Q}^{1/2}  \mb{Q}^{1/2}
\smat{0\\ I\\ 0\\ 0}^\top.
\end{align*}
By Schur complement and $\Lambda > 0$, this matrix inequality is equivalent to \eqref{sol-bil:qlmi}.
In summary, we have shown that \eqref{bil-decrease-Lyap:dt} holds if \eqref{sol-bil} is feasible and the statement is proven.
\endgroup
\end{proof}

For $\bar{x}=0$ and $\bar{u}=0$, our setting reduces to the one considered in \cite{bisoffi2020bilinear}.
We then examine the improvements of Theorem~\ref{thm:known_u_bar:dt}, for $\bar{x}=0$ and $\bar{u}=0$, upon~\cite{bisoffi2020bilinear}.

\begin{remark}[Comparison with \cite{bisoffi2020bilinear}]
\label{remark:comparison}
The common features of Theorem~\ref{thm:known_u_bar:dt}, for $\bar{x}=0$ and $\bar{u}=0$, and \cite{bisoffi2020bilinear} are: (i)~the \emph{scalar} decision variable $\lambda$ appears in products with decision variables $P$ and $Y$, so one fixes $\lambda$ to solve a linear matrix inequality and, on top of that, performs a line search in $\lambda$; (ii)~one can maximize the size of the guaranteed basin of attraction $\mathcal{B}_P^{\bar{x}}$ as in \cite[Cor.~1]{bisoffi2020bilinear} or impose exponential convergence as in \cite[Cor.~2]{bisoffi2020bilinear}.
The improvements of Theorem~\ref{thm:known_u_bar:dt} are: (i)~full development of the multi-input case, see also \cite[Remark~2]{bisoffi2020bilinear}; (ii)~most importantly, the fact that Theorem~\ref{thm:known_u_bar:dt} considers the realistic case of noisy data (for process noise) whereas \cite[Prop.~1]{bisoffi2020bilinear} considers noise-free data and assumes an upperbound on $\| C_\star \|$ and \cite[Prop.~2]{bisoffi2020bilinear} considers noisy data (for measurement noise) and assumes an upperbound on $\| C_\star \|$ and $\| A_\star \|$.
Doing without upperbounds on $\| C_\star \|$ and $\| A_\star \|$ is enabled by applying Petersen's lemma and is very desirable since it reduces conservatism.
\end{remark}

\section{Data-driven setpoint control with unknown equilibrium input}
\label{sec:UNknown_u_bar}

In this section we address Problem~\ref{problem:statement} when we do \emph{not} know a value $\bar{u}$ such that $(\bar{x},\bar{u})$ is an equilibrium for~\eqref{sys_alt}.
Then, we treat $\bar{u}$ as a design parameter, along with $K$, of the control law in~\eqref{control_law}.
In continuous time, \eqref{sys_tilde} becomes
\begin{align}
& \dot{\tilde x} = 
\ABC[\star] \smat{I\\ K\\ (I_m \otimes \bar{x})K + \bar{u} \otimes I_n + (I_m \otimes \tilde{x}) K} \tilde{x} \notag \\
& +
\ABCd[\star]\smat{\bar{x}\\ \bar{u}\\ (I_m \otimes \bar{x}) \bar{u}\\ 1}. \label{sys_ct_tilde_2}
\end{align}
In Section~\ref{sec:known_u_bar}, the known equilibrium input $\bar{u}$ ensured  $\ABCd[\star] \smat{\bar{x}\\ \bar{u}\\ (I_m \otimes \bar{x}) \bar{u}\\ 1} = 0$.
Since $\ABCd[\star]$ is unknown, we would be tempted to design $\bar{u}$ so that
\begin{align}
\label{ideal_Gamma_for_all_ABC_CT}
& \ABCd \smat{\bar{x}\\ \bar{u}\\ (I_m \otimes \bar{x}) \bar{u}\\ 1}  = 0 \,\,\,\, \forall \ABCd \in \mathcal{C}.
\end{align}
We note that for $\mb{Q} = 0$, the set $\mathcal{C}$ in~\eqref{set C form 2} reduces, by Assumption~\ref{ass:pers ext-bil}, to the singleton $\mathcal{C} = \{ \Zc^\top \} = \ABCd[\star]$ and the actual system could be identified, so we focus on the interesting case of $\mb{Q} \neq 0$.
Under this setting, however, the next result shows that \eqref{ideal_Gamma_for_all_ABC_CT} is impossible to achieve.

\begin{lemma} 
\label{lemma:cannot_find_u_bar}
Under Assumption~\ref{ass:pers ext-bil} and $\mb{Q} \neq 0$, there does not exist $\bar{u}$ such that \eqref{ideal_Gamma_for_all_ABC_CT} holds.
\end{lemma}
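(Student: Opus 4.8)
The plan is to argue by contradiction, exploiting the matrix-ellipsoid parametrization \eqref{set C form 3}. Suppose, for the sake of contradiction, that some $\bar u$ makes \eqref{ideal_Gamma_for_all_ABC_CT} hold, and abbreviate the fixed vector $\gamma := \smat{\bar x\\ \bar u\\ (I_m \otimes \bar x)\bar u\\ 1}$. Its last entry equals $1$, so $\gamma \neq 0$ regardless of $\bar x$ and $\bar u$. Since $\mb{A}^{-1/2}$ and $\mb{Q}^{1/2}$ are symmetric, the parametrization \eqref{set C form 3} says that $\ABCd$ ranges exactly over $\Zc^\top + \mb{Q}^{1/2} \Upsilon^\top \mb{A}^{-1/2}$ as $\Upsilon$ ranges over $\|\Upsilon\| \le 1$. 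Hence \eqref{ideal_Gamma_for_all_ABC_CT} is equivalent to requiring $\Zc^\top \gamma + \mb{Q}^{1/2} \Upsilon^\top \mb{A}^{-1/2} \gamma = 0$ for every $\Upsilon$ with $\|\Upsilon\| \le 1$.

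Next I would separate the constant and the $\Upsilon$-dependent parts. Evaluating the previous identity at the admissible choice $\Upsilon = 0$ yields $\Zc^\top \gamma = 0$; substituting this back, the requirement reduces to $\mb{Q}^{1/2} \Upsilon^\top w = 0$ for all $\|\Upsilon\| \le 1$, where $w := \mb{A}^{-1/2} \gamma$. Because $\mb{A} \succ 0$ under Assumption~\ref{ass:pers ext-bil}, the matrix $\mb{A}^{-1/2}$ is invertible, so $\gamma \neq 0$ gives $w \neq 0$.

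The key step is then to choose $\Upsilon$ so as to isolate $\mb{Q}^{1/2}$. For an arbitrary $u \in \real^n$ with $\|u\| \le 1$, I would take $\Upsilon^\top := \tfrac{1}{\|w\|} u\, w^\top$, which is admissible since $\|\Upsilon\| = \|u\| \le 1$, and which gives $\Upsilon^\top w = \|w\|\, u$. The requirement then reads $\|w\|\, \mb{Q}^{1/2} u = 0$; as $\|w\| \neq 0$ and $u$ is arbitrary in the unit ball, this forces $\mb{Q}^{1/2} = 0$, i.e.\ $\mb{Q} = 0$, contradicting the hypothesis $\mb{Q} \neq 0$. This contradiction shows no such $\bar u$ exists.

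I do not expect a genuine obstacle here; the two points meriting care are that the symmetric square roots allow transposing $\mb{A}^{-1/2} \Upsilon \mb{Q}^{1/2}$ into $\mb{Q}^{1/2} \Upsilon^\top \mb{A}^{-1/2}$, and that the chosen rank-one $\Upsilon$ indeed has induced $2$-norm at most one (its nonzero singular value is $\|u\|\,\|w\|/\|w\| = \|u\|$). The crux is the rank-one alignment with $w$, which converts the ``for all $\Upsilon$'' condition into the pointwise statement $\mb{Q}^{1/2} u = 0$ for all unit $u$, and thereby pins down $\mb{Q} = 0$.
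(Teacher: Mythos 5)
Your proof is correct, and while it shares the paper's overall skeleton---parametrize $\mathcal{C}$ via \eqref{set C form 3}, evaluate at the admissible choice $\Upsilon = 0$ to split off the condition $\Zc^\top \smat{\bar{x}\\ \bar{u}\\ (I_m \otimes \bar{x}) \bar{u}\\ 1} = 0$, note that $w := \mb{A}^{-1/2}\smat{\bar{x}\\ \bar{u}\\ (I_m \otimes \bar{x}) \bar{u}\\ 1} \neq 0$ because the last entry of the stacked vector is $1$ and $\mb{A} \succ 0$ under Assumption~\ref{ass:pers ext-bil}, then exhibit an admissible $\Upsilon$ violating the residual condition---your key step differs genuinely from the paper's. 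The paper eigendecomposes $\mb{Q}^{1/2} = T \Lambda T^\top$, selects an index $i$ with $\bar{z}_i \neq 0$, and constructs $\Upsilon^\top = T\smat{0 & \dots & e_1 & \dots & 0}$ so that the condition reduces to the nonzero row $\smat{\lambda_1 \bar{z}_i & 0 & \dots & 0}T^\top$, exploiting only the largest eigenvalue $\lambda_1 > 0$ of $\mb{Q}^{1/2}$. Your rank-one choice $\Upsilon = \tfrac{1}{\|w\|}\, w\, u^\top$ (whose induced norm is indeed $\|u\| \le 1$, as you verify) instead aligns $\Upsilon^\top$ with $w$, turning the residual condition into $\mb{Q}^{1/2} u = 0$ for every $u$ in the unit ball, which forces $\mb{Q}^{1/2} = 0$ and hence $\mb{Q} = 0$, contradicting the hypothesis outright. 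Your route buys three things: it avoids the eigendecomposition and any case analysis on the components of $w$; it establishes the sharper intermediate fact that the second condition in \eqref{equiv_cond_Gamma_CT} can hold only if $\mb{Q} = 0$ (the paper only refutes it at one eigendirection); and it only uses the inclusion direction of \eqref{set C form 3} (each $\|\Upsilon\| \le 1$ yields a member of $\mathcal{C}$), so exactness of the parametrization is not even needed for the contradiction. The two technical points you flag are both sound: the transposition $(\mb{A}^{-1/2}\Upsilon\mb{Q}^{1/2})^\top = \mb{Q}^{1/2}\Upsilon^\top\mb{A}^{-1/2}$ is legitimate because both principal square roots are symmetric (with $\mb{Q} \succeq 0$ guaranteed by the paper's Lemma~1), and the dimensions of your non-square $\Upsilon \in \real^{(n+m+mn+1)\times n}$ match those required by \eqref{set C form 3}.
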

\begin{proof}
Under Assumption~\ref{ass:pers ext-bil}, \eqref{ideal_Gamma_for_all_ABC_CT} holds if and only if
\begin{align*}
0  = \smat{\bar{x}\\ \bar{u}\\ (I_m \otimes \bar{x}) \bar{u}\\ 1}^\top (\Zc + \mb{A}^{-1/2} \Upsilon \mb{Q}^{1/2} ) \quad \forall \Upsilon \colon \| \Upsilon \| \le 1. 
\end{align*}
By considering $\Upsilon = 0$, this holds if and only if
\begin{equation}
\label{equiv_cond_Gamma_CT}
\begin{aligned}
0 & = \smat{\bar{x}\\ \bar{u}\\ (I_m \otimes \bar{x}) \bar{u}\\ 1}^\top \Zc , \\
0 & = \smat{\bar{x}\\ \bar{u}\\ (I_m \otimes \bar{x}) \bar{u}\\ 1}^\top \mb{A}^{-1/2} \Upsilon \mb{Q}^{1/2}  \quad \forall \Upsilon \colon \| \Upsilon \| \le 1. 
\end{aligned}
\end{equation}
Since $\mb{Q}^{1/2}=(\mb{Q}^{1/2})^\top \succeq 0$, its $n$ eigenvalues can be ordered as $\lambda_1 \ge \dots \ge \lambda_n \ge 0$ and, for $\Lambda := \diag(\lambda_1,\dots, \lambda_n)$, there exists a real orthogonal matrix $T$ (i.e., $T^\top T = T T^\top = I$) such that
\begin{equation*}
\mb{Q}^{1/2} = T \Lambda T^\top,
\end{equation*}
which is an eigendecomposition of $\mb{Q}^{1/2}$.
Moreover, since $\mb{Q} \neq 0$, we have at least $\lambda_1 > 0$.
Hence, the second condition in \eqref{equiv_cond_Gamma_CT} rewrites
\begin{align}
\label{equiv_cond_Gamma_2_CT}
0 & = \smat{\bar{x}\\ \bar{u}\\ (I_m \otimes \bar{x}) \bar{u}\\ 1}^\top \mb{A}^{-1/2} \Upsilon T \Lambda T^\top  \quad \forall \Upsilon \colon \| \Upsilon \| \le 1.
\end{align}
By $\mb{A}^{-1/2} \succ 0$, we have $\mb{A}^{-1/2} \smat{\bar{x}\\ \bar{u}\\ (I_m \otimes \bar{x}) \bar{u}\\ 1}  =: \bar z \neq 0$ for all $\bar{u}$. Suppose then that the $i$-th component $\bar{z}_i$ of $\bar{z}$ is nonzero.
Select $\Upsilon^\top = T \smat{0 & \dots & e_1 & \dots & 0}$ where $e_1 :=(1,0,\dots,0)$ is the $i$-th and only nonzero column of $\smat{0 & \dots & e_1 & \dots & 0}$.
We have $\Upsilon \Upsilon^\top = \diag (0, \dots, 0, 1,0, \dots, 0)$ with the $1$ in position $(i,i)$ and, thus, $\Upsilon \Upsilon^\top \preceq I$.
For such $\Upsilon$, \eqref{equiv_cond_Gamma_2_CT} implies
\begin{align*}
0 = \bar{z}^\top \smat{0 & \dots & e_1 & \dots & 0}^\top \Lambda T^\top = \smat{\lambda_1 \bar{z}_i & 0 & \dots  & 0} T^\top,
\end{align*}
which cannot hold since $\bar{z}_i \neq 0$ and $\lambda_1 > 0$.
In other words, we have shown that for all $\bar{u}$, there exists $\Upsilon$ with $\| \Upsilon \| \le 1$ such that $0 \neq \smat{\lambda_1 \bar{z}_i & 0 & \dots  & 0}$, and this proves the statement.
\end{proof}

Motivated by the negative result in Lemma~\ref{lemma:cannot_find_u_bar}, we then relax \eqref{ideal_Gamma_for_all_ABC_CT} to
\begin{subequations}
\label{min_progr_Gamma_CT}
\begin{align}
& \text{min.} & & \gamma \quad \text{(over $\gamma$, $\bar{u}$)} \label{min_progr_Gamma_CT:min}\\
& \text{s. t.} & & \bigg| \ABCd \smat{\bar{x}\\ \bar{u}\\ (I_m \otimes \bar{x}) \bar{u}\\ 1} \bigg|^2 \le \gamma \,\,\,\, \forall \ABCd \in \mathcal{C}. \label{min_progr_Gamma_CT:for_all}
\end{align}
\end{subequations}
We show next that \eqref{min_progr_Gamma_CT} is actually equivalent to a feasible semidefinite program.
\begin{lemma}
\label{lemma:design_u_bar_gamma}
Under Assumption~\ref{ass:pers ext-bil}, let data $X_1$, $X_0$, $U_0$, $S_0$ in \eqref{data_mat} yield $\mb{A}$, $\Zc$, $\mb{Q}$ in \eqref{set C form 1:ABC}, \eqref{set C form 2:Zc and Q} and suppose $\mb{Q} \neq 0$.
Then, \eqref{min_progr_Gamma_CT} is equivalent to
\begingroup
\setlength\arraycolsep{2pt}
\begin{subequations}
\label{min_progr_Gamma_bil_equiv_CT}
\begin{align}
& \text{min.}\hspace*{-3mm} & & \gamma \quad \text{(over $\gamma$, $\bar{u}$, $\sigma$)} \label{min_progr_Gamma_bil_equiv_CT:min}\\
& \text{s.t.}\hspace*{-5mm} & & 
\bmat{
- \gamma I & \star & \star & \star\\
\smat{\bar{x}\\ \bar{u}\\ (I_m \otimes \bar{x}) \bar{u}\\ 1}^\top \Zc & -I & \star & \star\\
0 & \mb{A}^{-1/2} \smat{\bar{x}\\ \bar{u}\\ (I_m \otimes \bar{x}) \bar{u}\\ 1} & -\sigma I & \star\\
\sigma \mb{Q}^{1/2} & 0 & 0 & - \sigma I} \preceq 0 ,\notag\\
& & & \sigma >0 \label{min_progr_Gamma_bil_equiv_CT:for_all}
\end{align}
\end{subequations}
\endgroup
and is feasible.
\end{lemma}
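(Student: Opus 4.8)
The plan is to turn the robust constraint \eqref{min_progr_Gamma_CT:for_all} into a single linear matrix inequality by eliminating the quantifier ``$\forall\,\ABCd\in\mathcal{C}$'' via Petersen's lemma, and then read off feasibility from the boundedness of $\mathcal{C}$. Throughout I abbreviate $\Gamma:=\smat{\bar{x}\\ \bar{u}\\ (I_m \otimes \bar{x}) \bar{u}\\ 1}$. Using the matrix-ellipsoid parametrization \eqref{set C form 3}, every $\ABCd\in\mathcal{C}$ has the form $(\Zc+\mb{A}^{-1/2}\Upsilon\mb{Q}^{1/2})^\top$ with $\|\Upsilon\|\le 1$, so that $\ABCd\,\Gamma=\Zc^\top\Gamma+\mb{Q}^{1/2}\Upsilon^\top\mb{A}^{-1/2}\Gamma$, which is affine in $\Upsilon^\top$. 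Hence \eqref{min_progr_Gamma_CT:for_all} is exactly the requirement that $|\Zc^\top\Gamma+\mb{Q}^{1/2}\Upsilon^\top\mb{A}^{-1/2}\Gamma|^2\le\gamma$ for all $\Upsilon$ with $\|\Upsilon\|\le1$.

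Next I would rewrite this scalar bound in matrix form. For each fixed $\Upsilon$, a Schur complement shows $|\,\cdot\,|^2\le\gamma$ is equivalent to $\smat{\gamma I & \star\\ \Zc^\top\Gamma+\mb{Q}^{1/2}\Upsilon^\top\mb{A}^{-1/2}\Gamma & 1}\succeq 0$, i.e.\ to the negative of this matrix being $\preceq0$. The latter has precisely the Petersen structure $M+\mathcal{H}\,\Upsilon^\top\mathcal{E}+(\mathcal{H}\,\Upsilon^\top\mathcal{E})^\top\preceq0$ with $\mathcal{H}=\smat{\mb{Q}^{1/2}\\ 0}$ and $\mathcal{E}=\bmat{0 & \mb{A}^{-1/2}\Gamma}$ (up to signs), where the fact that $\Upsilon$ enters \emph{transposed} is the one delicate point to track. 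Applying Petersen's lemma as in \cite[Fact~1]{bisoffi2022petersen} removes the quantifier and introduces a multiplier $\sigma>0$ with $M+\sigma\,\mathcal{H}\mathcal{H}^\top+\sigma^{-1}\mathcal{E}^\top\mathcal{E}\preceq0$; here $\mathcal{H}\mathcal{H}^\top$ contributes the term $\sigma\,\mb{Q}=\sigma\,\mb{Q}^{1/2}\mb{Q}^{1/2}$ and $\mathcal{E}^\top\mathcal{E}$ the term $\sigma^{-1}\Gamma^\top\mb{A}^{-1}\Gamma$. A final pair of Schur complements, pulling these two contributions out into two extra diagonal blocks $-\sigma I$, reproduces exactly the block inequality \eqref{min_progr_Gamma_bil_equiv_CT:for_all}, the off-diagonal sign change being absorbed by a congruence with $\diag(I,-I)$. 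Since every step is an equivalence, the two programs share the same feasible pairs $(\gamma,\bar u)$ and the same optimal value, $\sigma$ being an auxiliary variable.

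Finally, feasibility. Fix any $\bar u$; then $\Gamma$ is a fixed finite vector, and by the boundedness of $\mathcal{C}$ with respect to any matrix norm (established in the Lemma of Section~\ref{sec:data}), the quantity $\sup_{\ABCd\in\mathcal{C}}|\ABCd\,\Gamma|^2$ is finite, so \eqref{min_progr_Gamma_CT:for_all} holds for that $\bar u$ and any sufficiently large $\gamma$. Thus \eqref{min_progr_Gamma_CT} is feasible, and by the equivalence just established \eqref{min_progr_Gamma_bil_equiv_CT} is feasible too, with a $\sigma>0$ supplied by the ``if'' direction of Petersen's lemma. The main obstacle I anticipate is purely in the bookkeeping of the second step: getting the Petersen triple $(\mathcal{H},\Upsilon^\top,\mathcal{E})$ right given that $\Upsilon$ appears transposed, and then carrying out the double Schur complement so that the signs and the two $-\sigma I$ blocks land exactly as in \eqref{min_progr_Gamma_bil_equiv_CT:for_all}; I would also confirm that the non-strict form of Petersen's lemma is the one quoted, since the target inequality is $\preceq0$ rather than $\prec0$.
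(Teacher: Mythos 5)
Your proposal is correct and follows essentially the same route as the paper's proof: parametrize $\mathcal{C}$ via \eqref{set C form 3}, Schur-complement the quadratic constraint $|Z^\top \Gamma|^2 \le \gamma$, eliminate the unit-norm uncertainty with the nonstrict Petersen's lemma \cite[Fact~2]{bisoffi2022petersen} (yielding the $\sigma \mb{Q}$ and $\sigma^{-1}\Gamma^\top \mb{A}^{-1}\Gamma$ terms), Schur-complement these into the two $-\sigma I$ blocks, and deduce feasibility from the boundedness of $\mathcal{C}$, which the paper makes explicit with the feasible point $\bar{u}=0$ and $\gamma$ a computable norm bound. The one item you flagged but left unverified is exactly what the paper checks in a footnote: the nonstrict Petersen equivalence requires both factors to be nonzero, which holds here because $\mb{Q}^{1/2}\neq 0$ by the hypothesis $\mb{Q}\neq 0$ (this is where that hypothesis enters) and $\mb{A}^{-1/2}\Gamma \neq 0$ since $\mb{A}\succ 0$ by Assumption~\ref{ass:pers ext-bil} and the last entry of $\Gamma$ is $1$.
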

\begin{proof}
For brevity, we use $\ABCd = Z^\top \in \mathcal{C}$ as in~\eqref{set C form 1:set only} and rewrite \eqref{min_progr_Gamma_CT} as 
\begin{subequations}
\label{min_progr_Gamma_compact_CT}
\begin{align}
& \text{min.} & & \gamma \quad \text{(over $\gamma$, $\bar{u}$)} \label{min_progr_Gamma_compact_CT:min}\\
& \text{s. t.} & & | Z^\top \bar{\nu}(\bar{u},\bar x) |^2 := 
\bigg| Z^\top \smat{\bar{x}\\ \bar{u}\\ (I_m \otimes \bar{x}) \bar{u}\\ 1} \bigg|^2 \le \gamma \,\,\,\, \forall Z^\top \in \mathcal{C}.  \label{min_progr_Gamma_compact_CT:for_all}
\end{align}
\end{subequations}
\eqref{min_progr_Gamma_compact_CT:for_all} is equivalent to
\begin{align*}
\big(Z^\top \bar{\nu}(\bar{u},\bar x) \big) \big( Z^\top \bar{\nu}(\bar{u},\bar x)  \big)^\top
\preceq \gamma I \quad \forall Z^\top \in \mathcal{C}
\end{align*}
and, by Schur complement, to
\begin{align*}
\bmat{- \gamma I & Z^\top \nu(\bar{u},\bar x) \\ 
\nu(\bar{u},\bar x)^\top  Z & -I}
\preceq 0 \quad \forall Z^\top \in \mathcal{C}
\end{align*}
and, by Assumption~\ref{ass:pers ext-bil}, to
\begin{align*}
& \forall \Upsilon \colon \| \Upsilon \| \le 1,~~ 0 \succeq 
\bmat{- \gamma I & ~(\Zc + \mb{A}^{-1/2} \Upsilon \mb{Q}^{1/2} )^\top \nu(\bar{u},\bar x) \\ 
\star  & -I} \\
& = 
\bmat{- \gamma I & ~\Zc^\top \nu(\bar{u},\bar x) \\ \star & -I} \\
& \quad   + 
\Tr \left\{ 
\bmat{0\\ \nu(\bar{u},\bar x)^\top} \mb{A}^{-1/2} \Upsilon \mb{Q}^{1/2} \bmat{I & 0}
\right\}
\end{align*}
and, by nonstrict Petersen's lemma in the version reported in \cite[Fact~2]{bisoffi2022petersen} and $\mb{Q} \neq 0$ and Assumption~\ref{ass:pers ext-bil}
~\footnote{
To apply \cite[Fact~2]{bisoffi2022petersen}, we need to verify $\bmat{0\\ \nu(\bar{u},\bar x)^\top} \mb{A}^{-1/2} \neq 0$ and $ \mb{Q}^{1/2} \bmat{I & 0} \neq 0$ or, equivalently, $\mb{A}^{-1/2} \smat{0\\ \bar{x}\\ \bar{u}\\ (I_m \otimes \bar{x}) \bar{u}\\ 1} \neq 0$ and $\mb{Q}^{1/2}\neq 0$. These conditions are true because $\mb{A} \succ 0$ by Assumption~\ref{ass:pers ext-bil} and $\mb{Q} \neq 0$.
}, to
\begin{align*}
& \exists \sigma > 0 \colon  0 \succeq  
\bmat{- \gamma I &  \Zc^\top \nu(\bar{u},\bar x)\\ 
\star  & -I} 
+ \sigma \bmat{I\\ 0} \mb{Q}^{1/2} \mb{Q}^{1/2} \bmat{I\\ 0}^\top \\
& + \frac{1}{\sigma}\bmat{0\\ \nu(\bar{u},\bar x)^\top} \mb{A}^{-1/2} \mb{A}^{-1/2} 
\bmat{0\\ \nu(\bar{u},\bar x)^\top}^\top 
\end{align*}
and, by Schur complement and $\sigma > 0$, to \eqref{min_progr_Gamma_bil_equiv_CT:for_all}.
This shows the first part of the statement, i.e., the equivalence of \eqref{min_progr_Gamma_compact_CT} and \eqref{min_progr_Gamma_bil_equiv_CT}.
We then show that \eqref{min_progr_Gamma_bil_equiv_CT} is feasibile because \eqref{min_progr_Gamma_compact_CT} is feasible.
Under Assumption \ref{ass:pers ext-bil}, the same steps in \cite[Lemma~2]{bisoffi2022petersen} show also for the bilinear case under consideration that, for each $Z^\top \in \mathcal{C}$, $\| Z \| \le \bar{z}_{\|\|} := \| \Zc \| + \lambda_{\min}(\mb{A})^{-1/2} \| \mb{Q}^{1/2} \|$.
A feasible $(\gamma,\bar{u})$ for \eqref{min_progr_Gamma_compact_CT} is given by $\gamma = \bar{z}_{\|\|}^2 (|\bar{x}|^2+1)$ and $\bar{u} = 0$. Indeed, this $(\gamma,\bar{u})$ ensures, for all $Z^\top \in \mathcal{C}$, $| Z^\top \bar{\nu}(0,\bar x) |^2 \le \bar{z}_{\|\|}^2 (|\bar{x}|^2 + 1) = \gamma$, i.e., satisfaction of~\eqref{min_progr_Gamma_compact_CT:for_all}.
\end{proof}

\begin{figure*}
\begin{subequations}
\label{progr_K_bil_setpoint_CT}
\begin{align}
& \text{find}  & & P \succ 0, Y, s, \tau_\eta \ge 0, \tau_\gamma \ge 0, \lambda > 0, \Lambda > 0 \label{progr_K_bil_setpoint_CT:find}\\
& \text{s.t.} & & s+\tau_\eta \ge 0,  \tau_\eta (\eta-1) = \epsilon + s + \tau_\gamma \gamma \label{progr_K_bil_setpoint_CT:constr1}\\
&  & & 
\bmat{
\Tr\bigg\{ \smat{P\\ Y\\ (I_m \otimes \bar{x})Y + (\bar{u} \otimes I_n)P\\ 0 }^\top \Zc \bigg\} - s P & \star & \star & \star & \star & \star \\
I_n & - \tau_\gamma I_n & \star & \star & \star & \star \\
\smat{0\\ 0\\  I_m \otimes P \\ 0}^\top \Zc & 0 & -\lambda (I_m \otimes P) & \star & \star & \star\\
\lambda Y & 0 & 0 & - \lambda I_m & \star & \star\\
\mb{A}^{-1/2} \smat{P\\ Y\\ (I_m \otimes \bar{x})Y + (\bar{u} \otimes I_n)P\\ 0 } & 0 & \mb{A}^{-1/2} \smat{0\\ 0\\  I_m \otimes P \\ 0} & 0 & - \Lambda I_{n+m+mn+1} & \star\\
\Lambda \mb{Q}^{1/2} & 0 & 0 & 0 & 0 & - \Lambda I_n
} \preceq 0 \label{progr_K_bil_setpoint_CT:constr2} 
\end{align}
\end{subequations}
\hrulefill
\end{figure*}

Since \eqref{min_progr_Gamma_bil_equiv_CT} is feasible by Lemma~\ref{lemma:design_u_bar_gamma}, it always returns a pair $(\gamma,\bar{u})$ under the given hypothesis, and we use this $(\gamma,\bar{u})$ to design the control gain $K$ as in the next result.

\begin{theorem}
\label{thm:UNknown_u_bar}
Under the same hypothesis of Lemma~\ref{lemma:design_u_bar_gamma}, obtain $(\gamma,\bar{u})$ from solving \eqref{min_progr_Gamma_bil_equiv_CT}.
For this $(\gamma,\bar{u})$ and given $\epsilon > 0$ and $\eta \in (0,1)$, suppose the program in~\eqref{progr_K_bil_setpoint_CT}, displayed over two columns, is feasible and let $P$ and $K= Y P^{-1}$ be a solution to it.
For
\begin{align*}
\dot x = A_\star x + B_\star u + C_\star (I_m \otimes x) u + d_\star, u = K(x - \bar{x}) + \bar{u},
\end{align*}
the set $\{ x \in \real^n \colon (x- \bar{x})^\top P^{-1} (x- \bar{x}) \le \eta \}$ is locally asymptotically stable with basin of attraction including $\{ x \in \real^n \colon (x- \bar{x})^\top P^{-1} (x- \bar{x}) \le 1 \}$.
\end{theorem}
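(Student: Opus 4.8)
The plan is to reduce the set-stability claim to a Lyapunov decrease condition via the change of coordinates $\tilde x := x - \bar x$, under which the closed loop is \eqref{sys_ct_tilde_2}, and then to robustify that condition against the three sources of uncertainty exactly as in the proof of Theorem~\ref{thm:known_u_bar:ct}, with one extra ingredient (an S-procedure) to cope with the residual affine term. Writing $V(\tilde x) := \tilde x^\top P^{-1} \tilde x$, the target set becomes $\{ \tilde x \colon V(\tilde x) \le \eta\}$ and the claimed basin becomes $\mathcal{B}_P = \{ \tilde x \colon V(\tilde x) \le 1\}$. It suffices to establish $\dot V(\tilde x) \le -\epsilon < 0$ for every $\tilde x$ in the annulus $\{ \eta \le V(\tilde x) \le 1\}$: this makes $\mathcal{B}_P$ forward invariant (since $\dot V < 0$ on $V=1$) as well as $\{V \le \eta\}$ forward invariant (since $\dot V < 0$ on $V=\eta$), and the uniform margin $\epsilon$ forces every solution starting in $\mathcal{B}_P$ to reach $\{V \le \eta\}$ in finite time, yielding asymptotic stability of $\{V \le \eta\}$ with the stated basin.

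Next I would compute $\dot V$ along \eqref{sys_ct_tilde_2}. Its quadratic-in-$\tilde x$ part coincides with the one in Theorem~\ref{thm:known_u_bar:ct}, while the new feature is the additive term $g := \ABCd[\star]\smat{\bar{x}\\ \bar{u}\\ (I_m \otimes \bar{x}) \bar{u}\\ 1}$, contributing $\Tr\{ \tilde x^\top P^{-1} g\}$. Since $\ABCd[\star] \in \mathcal{C}$ and the $(\gamma,\bar u)$ returned by \eqref{min_progr_Gamma_bil_equiv_CT} (feasible by Lemma~\ref{lemma:design_u_bar_gamma}) satisfies \eqref{min_progr_Gamma_CT:for_all}, this term obeys $|g|^2 \le \gamma$. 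As $\ABCd[\star]$ is unknown, I would ask for the decrease for all $\ABCd \in \mathcal{C}$ and, independently, treat $g$ as a disturbance subject only to $|g|^2 \le \gamma$ (a valid over-approximation, since the true tuple realizes one point of the product set), and, following Theorem~\ref{thm:known_u_bar:ct}, replace the bilinear occurrence $(I_m \otimes \tilde x)K$ by $(I_m \otimes \tilde y)K$ with an independent $\tilde y \in \mathcal{B}_P$, i.e.\ $V^\top V \preceq I_m \otimes P$ for $V = I_m \otimes \tilde y^\top$.

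The core step is the S-procedure turning ``$\dot V \le -\epsilon$ on the annulus, for all $|g|^2 \le \gamma$'' into \eqref{progr_K_bil_setpoint_CT:constr2}. I would use multiplier $\tau_\eta \ge 0$ for $\eta - V(\tilde x) \le 0$, multiplier $\tau_\eta + s \ge 0$ for $V(\tilde x) - 1 \le 0$ (so the scalar $s$ is sign-indefinite but constrained by $s + \tau_\eta \ge 0$ as in \eqref{progr_K_bil_setpoint_CT:constr1}), and $\tau_\gamma \ge 0$ for $|g|^2 - \gamma \le 0$, requiring $\dot V + \epsilon - \tau_\eta(\eta - V) - (\tau_\eta + s)(V-1) - \tau_\gamma(|g|^2 - \gamma) \le 0$ for all $(\tilde x, g)$. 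The coefficient of $V(\tilde x)$ collects to $-s$, producing the term $-sP$ in the $(1,1)$ block after the congruence $KP = Y$; the constants collect to exactly the balancing identity $\tau_\eta(\eta-1) = \epsilon + s + \tau_\gamma\gamma$ of \eqref{progr_K_bil_setpoint_CT:constr1}; and eliminating $g$ by a Schur complement against $-\tau_\gamma I_n$ produces the second block row $\smat{I_n & -\tau_\gamma I_n}$. What remains must hold for all $\tilde y \in \mathcal{B}_P$ and all $\ABCd \in \mathcal{C}$: I would dispatch the $\tilde y$-dependence by Petersen's lemma \cite[Fact~1]{bisoffi2022petersen} (multiplier $\lambda$, giving the blocks $\lambda Y$, $-\lambda(I_m\otimes P)$, $-\lambda I_m$) and the $\ABCd$-dependence through the parametrization \eqref{set C form 3}, $\ABCd = (\Zc + \mb{A}^{-1/2}\Upsilon\mb{Q}^{1/2})^\top$ with $\|\Upsilon\|\le 1$, and a second Petersen step (multiplier $\Lambda$, giving the $\mb{A}^{-1/2}$ and $\Lambda\mb{Q}^{1/2}$ blocks), exactly as in Theorem~\ref{thm:known_u_bar:ct}; a final Schur complement assembles \eqref{progr_K_bil_setpoint_CT:constr2}.

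The hard part will be the S-procedure bookkeeping: tracking how the two quadratic constraints $V \ge \eta$ and $V \le 1$ merge into the single sign-indefinite variable $s$ together with the nonnegativity $s + \tau_\eta \ge 0$ and the affine balancing identity, and checking that this is consistent with simultaneously carrying the disturbance multiplier $\tau_\gamma$, the bilinear multiplier $\lambda$, and the data-uncertainty multiplier $\Lambda$ through the nested Schur complements without cross terms (e.g.\ between the $g$-block and the Petersen blocks) spoiling the block structure of \eqref{progr_K_bil_setpoint_CT:constr2}. By contrast, once the affine term is handled, the bilinear relaxation and the two Petersen steps are routine repetitions of Theorem~\ref{thm:known_u_bar:ct}.
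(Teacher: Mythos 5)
Your proposal follows the paper's proof essentially step for step: the same reduction to a $\dot V \le -\epsilon$ condition on the annulus (which the paper isolates as an auxiliary lemma, proving invariance and finite-time reaching by contradiction and then citing a standard result to upgrade attractivity plus invariance to asymptotic stability of the set), the same over-approximation of the residual term by an independent disturbance with $|g|^2 \le \gamma$ and of the bilinear term by an independent $\tilde y \in \mathcal{B}_P$, the same three-multiplier S-procedure with the substitution $s = \tau_P - \tau_\eta$ producing the $-sP$ block and the balancing identity of \eqref{progr_K_bil_setpoint_CT:constr1}, and the same two Petersen/Schur steps assembling \eqref{progr_K_bil_setpoint_CT:constr2}. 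Two harmless nuances: since the final inequality is non-strict, the paper invokes the nonstrict Petersen's lemma \cite[Fact~2]{bisoffi2022petersen} rather than Fact~1 as you cite, and the S-procedure constant term only yields ``$\le 0$,'' with the equality in \eqref{progr_K_bil_setpoint_CT:constr1} justified without loss of generality by an argument analogous to \cite[p.~83]{boyd1994linear} --- neither affects the sufficiency direction your argument needs.
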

\begin{proof}
\begingroup
\setlength\arraycolsep{1.pt}
Before proving Theorem~\ref{thm:UNknown_u_bar}, we provide the next auxiliary result, for which, given $P \succ 0$, we define function $V_P \colon \real^n \to \real$ and set $\B[P]$ as
\begin{align}
\label{Lyap_fun_set_of_Q_CT}
V_P (\tilde{x}) := \tilde{x}^\top P^{-1} \tilde{x} \text{ and } \B[P]  & := \{ \tilde x \in \real^n \colon \tilde{x}^\top P^{-1} \tilde{x} \le 1 \}
\end{align}
and $\overline{\mathcal{S}}$ as the closure of a generic set $\mathcal{S}$.
\begin{lemma}
\label{lemma:set_attractive_inv_CT}
Consider the dynamics $\dot{\tilde{x}} = F(\tilde{x})$ with $F$ continuously differentiable.
If, for some $P \succ 0$, $\epsilon >0$ and $\eta \in (0,1)$,
\begin{align}
& \frac{\partial V_P}{\partial \tilde{x}}(\tilde{x}) F(\tilde{x})
\le - \epsilon \quad \forall \tilde{x} \in \overline{\B[P] \backslash \B[\eta P]} \label{decrease_on_annulus}
\end{align}
holds, then (i)~solutions to $\dot{\tilde{x}} = F(\tilde{x})$ with initial condition in $\B[P]$ eventually reach $\B[\eta P]$ and remain in $\B[\eta P]$ thereafter; (ii)~the set $\B[\eta P]$ is locally asymptotically stable with basin of attraction including $\B[P]$. 
\end{lemma}
\begin{proofof}{Lemma~\ref{lemma:set_attractive_inv_CT}}
Note that, by $\eta \in (0,1)$, $\B[\eta P] = \{ \tilde{x} \in \real^n \colon \tilde{x}^\top (\eta^{-1} P^{-1}) \tilde{x} \le 1\} \subseteq \B[P]$. 
We start proving (i).
Suppose by contradiction that there exists a solution $\tilde{x}$ with initial condition in $\B[P]$ that does not reach the set $\B[\eta P]$, i.e., $\tilde{x}(0) \in \mathcal{B}_P$ and $\tilde{x}(t) \notin \B[\eta P]$ for all $t \ge 0$.
It can happen \emph{either} that such solution exits $\B[P]$ at some time $\bar{t} \ge 0$, i.e., $\tilde{x}(t) \in \B[P]$ for all $t$ with $0 \le t \le \bar{t}$ and for some $\bar{\epsilon} > 0$, $\tilde{x}(t) \notin \B[P]$ for all $t$ with $\bar t < t \le \bar t + \bar \epsilon$; \emph{or} that such solution never exits $\B[P]$, i.e., $\tilde{x}(t) \in \B[P]$ for all $t \ge 0$.
In the first case, consider the function $t \mapsto v(t) := V_{P}(\tilde{x}(t))$, which is continuously differentiable since it is a composition of continuously differentiable functions ($\tilde{x}$ is continuously differentiable since it is a solution in the classical sense, as $F$ is continuously differentiable.)
We have that $v(\bar t) = V_{P}(\tilde{x}(\bar t)) \le 1$ and $v(\bar t + h) = V_{P}(\tilde{x}(\bar t + h)) > 1$ for all $h \in (0,\bar{\epsilon}]$; hence, $v(\bar t + h) - v(\bar t) > 0$ for all $h \in (0,\bar{\epsilon}]$ and $\dot{v}(\bar t) := \lim_{h \to 0} \frac{v(\bar t + h) - v(\bar t)}{h} \ge 0$; on the other hand, 
\begin{align*}
\dot{v}(\bar t) = \tfrac{\partial V_{P}}{\partial \tilde x}(\tilde{x}(\bar t) ) \dot{\tilde{x}}(\bar t) = \tfrac{\partial V_{P}}{\partial \tilde x}(\tilde{x}(\bar t) ) F(\tilde{x}(\bar t)) \le - \epsilon
\end{align*}
since $\tilde{x}(\bar{t}) \in \B[P] \backslash \B[\eta P]$, thereby contradicting $\dot{v}(\bar t) \ge 0$.
So, the solution considered in the first case cannot exist.
In the second case, $\tilde{x}(t) \in \B[P]\backslash \B[\eta P]$ for all $t\ge 0$ and, then, such solution satisfies
\begin{align*}
& \tilde{x}(t)^\top P^{-1} \tilde{x}(t) = v(t) = v(0) + \int_0^t \tfrac{\partial V_{P}}{\partial \tilde x}(\tilde{x}(\tau) ) F(\tilde{x}(\tau)) d\tau \\
& \le v(0) - \epsilon t \le 1 - \epsilon t, \quad \forall t \ge 0.
\end{align*}
Taking $t$ sufficiently large contradicts that $\tilde{x}(t)^\top P^{-1} \tilde{x}(t) > \eta$ for all $t \ge 0$.
In summary, since both the first and second case have led to a contradiction, each solution with initial condition in $\B[P]$ must reach the set $\B[\eta P]$ eventually.
The set $\B[\eta P]$ is (forward) invariant because \eqref{decrease_on_annulus} implies that $\frac{\partial V_P}{\partial \tilde{x}}(\tilde{x}) F(\tilde{x}) \le - \epsilon$ for all $\tilde{x}$ with $\tilde{x}^\top P^{-1} \tilde{x} = \eta$.
We end proving (ii).
By $P\succ 0$ and $\eta \in (0,1)$, $\B[\eta P]$ is a compact set and is locally attractive and (forward) invariant as just shown; by regularity of $F$ (continuous differentiability), results such as \cite[Prop.~7.5]{goebel2012hybrid} yield that $\B[\eta P]$ is locally asymptotically stable.
Its basin of attraction includes $\B[P]$ by point~(i).
\end{proofof}

With the change of coordinates $\tilde{x} := x - \bar{x}$ and the definition in~\eqref{Lyap_fun_set_of_Q_CT}, the statement of Theorem~\ref{thm:UNknown_u_bar} is that $\B[\eta P]$ is locally asymptotically stable with basin of attraction including $\B[P]$ for
\begin{align}
&  \dot{\tilde{x}}= \ABCd[\star] \smat{I\\ K\\ (I_m \otimes \bar{x})K + \bar{u} \otimes I_n + (I_m \otimes \tilde{x}) K\\ 0} \tilde{x} \notag \\
& + \ABCd[\star]\smat{\bar{x}\\ \bar{u}\\ (I_m \otimes \bar{x}) \bar{u}\\ 1} \notag \\
& =: \ABCd[\star] \big( \mu(K,\tilde{x}) \tilde{x} + \bar{\nu}(\bar{u},\bar{x}) \big) \label{def_mu_nu}
\end{align}
where $\bar{u}$ is returned by~\eqref{min_progr_Gamma_bil_equiv_CT}.
Since $\ABCd[\star]$ is unknown to us, we show instead, after setting $\ABCd = Z^\top$ as in~\eqref{set C form 1:set only} for brevity, that $\B[\eta P]$ is locally asymptotically stable with basin of attraction including $\B[P]$ for
\begin{align}
&  \dot{\tilde{x}}= Z^\top \big( \mu(K,\tilde{x}) \tilde{x} + \bar{\nu}(\bar{u},\bar{x}) \big)
\end{align}
for all $Z^\top \in \mathcal{C}$.
By Lemma~\ref{lemma:set_attractive_inv_CT}, this statement holds if there exist $P = P^\top \succ 0$ and $K$ such that
\begin{subequations}
\begin{align}
& \forall Z^\top \in \mathcal{C}, \tilde{x} \in \overline{\B[P] \backslash \B[\eta P]}\label{attractivity_CT}\\
& \quad \tilde{x}^\top P^{-1} \dot{\tilde{x}} + \dot{\tilde{x}}^\top P^{-1} \tilde{x} \le -\epsilon, \hspace*{5pt} \dot{\tilde{x}} = Z^\top \big( \mu(K,\tilde{x}) \tilde{x} + \bar{\nu}(\bar{u},\bar{x}) \big) .\notag 
\end{align}
\end{subequations}
For $\gamma \ge 0$ returned by~\eqref{min_progr_Gamma_bil_equiv_CT}, define the set
\begin{align*}
\mathcal{B}_\gamma  & := \{ \bar{\xi} \in \real^n \colon \bar{\xi}^\top \bar{\xi} \le \gamma \}.
\end{align*}
\eqref{attractivity_CT} holds if
\begin{align}
& \forall Z^\top \in \mathcal{C}, \tilde{y} \in \B[P], \tilde{x} \in \overline{\B[P] \backslash \B[\eta P]}, \bar{\xi} \in \mathcal{B}_\gamma \label{attractivity-replace-with-bar-xi_CT} \\
& \quad 
\tilde{x}^\top P^{-1} \dot{\tilde{x}} + \dot{\tilde{x}}^\top P^{-1} \tilde{x} \le -\epsilon, \hspace*{5pt} \dot{\tilde{x}} = Z^\top \mu(K,\tilde{y}) \tilde{x} + \bar{\xi}. \notag
\end{align}
because \eqref{min_progr_Gamma_bil_equiv_CT}, to which $(\gamma,\bar{u})$ is a solution, ensures by construction that $\forall Z^\top \in \mathcal{C}$, $|Z^\top \bar{\nu}(\bar{u},\bar{x}) |^2 \le \gamma$.
By the S-procedure \cite[\S 2.6.3]{boyd1994linear}, the previous condition holds if
\begin{align*}
& \forall Z^\top \in \mathcal{C}, \tilde{y} \in \B[P], \exists \tau_P \ge 0, \tau_\eta \ge 0, \tau_\gamma \ge 0 \colon \forall \tilde{x}, \bar{\xi}\\
& \Tr\{ \tilde{x}^\top P^{-1} ( Z^\top \mu(K,\tilde{y}) \tilde{x} + \bar \xi )  \}  + \epsilon\\
& - \tau_P  \big( \tilde{x}^\top P^{-1} \tilde x - 1 \big)
- \tau_\eta \big( \eta - \tilde x^\top P^{-1} \tilde x \big)
- \tau_\gamma \big( \bar{\xi}^\top \bar{\xi} - \gamma \big)
\le 0.
\end{align*}
This holds, by algebraic computations, if and only if
\begin{align}
& \forall Z^\top \in \mathcal{C}, \tilde{y} \in \B[P], \exists \tau_P \ge 0, \tau_\eta \ge 0, \tau_\gamma \ge 0 \colon \notag \\
& 
\bmat{
\left\{
\begin{matrix}
\Tr\{ P^{-1} Z^\top \mu(K,\tilde{y}) \} \\
- (\tau_P - \tau_\eta ) P^{-1} 
\end{matrix}
\right\}
& \star & \star\\
P^{-1} & - \tau_\gamma I_n & \star\\
0 & 0 & 
\left\{
\begin{matrix}
\epsilon + \tau_P \\
- \tau_\eta \eta + \tau_\gamma \gamma
\end{matrix}
\right\} \\
} \preceq 0 . \label{bil_setpoint_proof_1_CT}
\end{align}
By an argument analogous to that in~\cite[p.~83]{boyd1994linear}, this condition is equivalent to
\begin{align}
& \forall Z^\top \in \mathcal{C}, \tilde{y} \in \B[P], \exists \tau_P \ge 0, \tau_\eta \ge 0, \tau_\gamma \ge 0 \colon \notag \\
& \epsilon + \tau_P - \tau_\eta \eta + \tau_\gamma \gamma = 0 \notag \\
& 
\bmat{
\left\{
\begin{matrix}
\Tr\{ P^{-1} Z^\top \mu(K, \tilde{y}) \}\\
- (\tau_P - \tau_\eta ) P^{-1} 
\end{matrix}
\right\}
& \star \\
P^{-1} & - \tau_\gamma I_n
}\preceq 0. \label{bil_setpoint_proof_2_CT}
\end{align}
To establish this equivalence, we show that \eqref{bil_setpoint_proof_1_CT} implies \eqref{bil_setpoint_proof_2_CT} since the converse is immediate.
Suppose that \eqref{bil_setpoint_proof_1_CT} holds for $\tau_\eta$, $\tau_P$, $\tau_\gamma$ and let us find $\bar \tau_\eta \ge 0$ for the very same $\tau_P$, $\tau_\gamma$ such that \eqref{bil_setpoint_proof_2_CT} holds.
Since \eqref{bil_setpoint_proof_1_CT} holds, the block-diagonal structure yields that $\epsilon + \tau_P - \tau_\eta \eta + \tau_\gamma \gamma \le 0$ and
\begin{align*}
N :=
\bmat{
\left\{
\begin{matrix}
\Tr\{ P^{-1} Z^\top \mu(K, \tilde{y}) \}\\
- (\tau_P - \tau_\eta ) P^{-1} 
\end{matrix}
\right\}
& \star \\
P^{-1} & - \tau_\gamma I_n
}\preceq 0.
\end{align*}
Since $\epsilon + \tau_P + \tau_\gamma \gamma \le \tau_\eta \eta$, we can select $\bar \tau_\eta \ge 0$ such that $\epsilon + \tau_P + \tau_\gamma \gamma = \bar \tau_\eta \eta$, which yields $\bar \tau_\eta \eta \le \tau_\eta \eta$ and $\tau_\eta P^{-1} \succeq \bar \tau_\eta P^{-1}$ by $P \succ 0$.
Hence, 
\begin{align*}
\bmat{
\left\{
\begin{matrix}
\Tr\{ P^{-1} Z^\top \mu(K, \tilde{y}) \}\\
- (\tau_P - \bar \tau_\eta ) P^{-1} 
\end{matrix}
\right\}
& \star \\
P^{-1} & - \tau_\gamma I_n
}\preceq N \preceq 0
\end{align*}
and this shows that \eqref{bil_setpoint_proof_2_CT} holds.
After proving the equivalence of \eqref{bil_setpoint_proof_1_CT} and \eqref{bil_setpoint_proof_2_CT}, we resume the main thread of the proof.
In a standard fashion, pre/post-multiply the matrix inequality in~\eqref{bil_setpoint_proof_2_CT} by a block diagonal matrix with entries $P$, $I$; set $K P = Y$, after recalling the definition of $\mu(K,\tilde{y})$ in~\eqref{def_mu_nu}; the previous condition holds if and only if there exist $P = P^\top \succ 0$ and $Y$ such that
\begin{align*}
& \forall Z^\top \in \mathcal{C}, \tilde{y} \in \B[P], \exists \tau_P \ge 0, \tau_\eta \ge 0, \tau_\gamma \ge 0 \colon \notag \\
& \tau_\eta \eta = \epsilon + \tau_P + \tau_\gamma \gamma \\
& 
0 \succeq \bmat{
\left\{
\begin{matrix}
\Tr \bigg\{ Z^\top \smat{P\\ Y\\ (I_m \otimes \bar{x})Y + (\bar{u} \otimes I_n)P + (I_m \otimes \tilde{y}) Y\\ 0} \bigg\}\\
- (\tau_P - \tau_\eta ) P
\end{matrix}
\right\}
& \star \\
I_n & - \tau_\gamma I_n
}  \\
& \phantom{ 0 } = 
\bmat{
\left\{
\begin{matrix}
\Tr \bigg\{ Z^\top \smat{P\\ Y\\ (I_m \otimes \bar{x})Y + (\bar{u} \otimes I_n)P\\ 0 } \bigg\}\\
- (\tau_P - \tau_\eta ) P
\end{matrix}
\right\}
& \star \\
I_n & - \tau_\gamma I_n
} \\
& \qquad +
\Tr
\left\{
\bmat{
Z^\top  \smat{0\\ 0\\ I_{mn} \\ 0 }\\
0 }
(I_m \otimes \tilde{y})
\bmat{Y & 0}
\right\}.
\end{align*}
By $P \succ 0$, $\tilde y \tilde y^\top \preceq P$ is the same as $\tilde y \in \B[P]$ and $\tilde y \tilde y^\top \preceq P$ is equivalent to $(I_m \otimes \tilde y)(I_m \otimes \tilde y)^\top \preceq I_m \otimes P$.
Then, the previous condition holds if
\begin{align*}
& \exists \tau_P \ge 0, \tau_\eta \ge 0, \tau_\gamma \ge 0, \lambda > 0 \colon \forall Z^\top \in \mathcal{C},  \notag \\
& \tau_\eta \eta = \epsilon + \tau_P + \tau_\gamma \gamma \\
& \left[
\begin{matrix}
\left\{
\begin{matrix}
\Tr\bigg\{ \smat{P\\ Y\\ (I_m \otimes \bar{x})Y + (\bar{u} \otimes I_n)P\\ 0 }^\top Z\bigg\}\\
- (\tau_P - \tau_\eta ) P
\end{matrix}
\right\} & \star \\
I_n & - \tau_\gamma I_n \\
\smat{0\\ 0\\  I_m \otimes P \\ 0}^\top Z  & 0  \\
\lambda Y & 0 
\end{matrix}
\right. \\
& \hspace*{3.5cm}\left.
\begin{matrix}
\dots & \star & \star \\
\dots & \star & \star \\
\dots & -\lambda (I_m \otimes P) & \star \\
\dots & 0 & - \lambda I_m \\
\end{matrix}
\right] \preceq 0
\end{align*}
where we first applied Petersen's lemma as reported in \cite[Fact~2]{bisoffi2022petersen} and then used Schur complement.
By Assumption~\ref{ass:pers ext-bil} and the parametrization of $\mathcal{C}$ in~\eqref{set C form 3}, this holds if and only if
\begin{align*}
& \exists \tau_P \ge 0, \tau_\eta \ge 0, \tau_\gamma \ge 0, \lambda > 0 \colon \forall \Upsilon \text{ with } \| \Upsilon \| \le 1,  \notag \\
& \tau_\eta \eta = \epsilon + \tau_P + \tau_\gamma \gamma \\
& 
0 \succeq \left[
\begin{matrix}
\left\{
\begin{matrix}
\Tr\bigg\{ \smat{P\\ Y\\ (I_m \otimes \bar{x})Y + (\bar{u} \otimes I_n)P\\ 0 }^\top\Zc  \bigg\}\\
- (\tau_P - \tau_\eta ) P
\end{matrix}
\right\} & \star \\
I_n & - \tau_\gamma I_n \\
\smat{0\\ 0\\  I_m \otimes P \\ 0}^\top \Zc  & 0  \\
\lambda Y & 0 
\end{matrix}
\right.\\
& \hspace*{1.cm}\left.
\begin{matrix}
\dots & \star & \star \\
\dots & \star & \star \\
\dots & -\lambda (I_m \otimes P) & \star\\
\dots & 0 & - \lambda I_m
\end{matrix}
\right]
\\
&
+ \Tr
\left\{
\bmat{\smat{P\\ Y\\ (I_m \otimes \bar{x})Y + (\bar{u} \otimes I_n)P\\ 0 }^\top\\ 
0\\
\smat{0\\ 0\\  I_m \otimes P \\ 0}^\top\\
0}
\mb{A}^{-1/2} \Upsilon \mb{Q}^{1/2}
\bmat{I_n & 0 & 0 & 0}
\right\}.
\end{align*}
By Petersen's lemma, this holds if
\begin{align*}
& \exists \tau_P \ge 0, \tau_\eta \ge 0, \tau_\gamma \ge 0, \lambda > 0, \Lambda > 0 \colon \\
& \tau_\eta \eta = \epsilon + \tau_P + \tau_\gamma \gamma, \\
& 
\left[
\begin{matrix}
\left\{
\begin{matrix}
\Tr\bigg\{ \smat{P\\ Y\\ (I_m \otimes \bar{x})Y + (\bar{u} \otimes I_n)P\\ 0 }^\top \Zc \bigg\}\\
- (\tau_P - \tau_\eta ) P
\end{matrix}
\right\} & \star \\
I_n & - \tau_\gamma I_n  \\
\smat{0\\ 0\\  I_m \otimes P \\ 0}^\top \Zc  & 0  \\
\lambda Y & 0 \\
\mb{A}^{-1/2} \smat{P\\ Y\\ (I_m \otimes \bar{x})Y + (\bar{u} \otimes I_n)P\\ 0 } & 0 \\
\Lambda \mb{Q}^{1/2} & 0 
\end{matrix}
\right. \\
& 
\left.
\begin{matrix}
\dots & \star & \star  & \star & \star\\
\dots & \star & \star & \star & \star\\
\dots & -\lambda (I_m \otimes P) & \star & \star & \star\\
\dots & 0 & - \lambda I_m & \star & \star\\
\dots & \mb{A}^{-1/2} \smat{0\\ 0\\  I_m \otimes P \\ 0} & 0 & -\Lambda I_{n+m+nm+1} & \star\\
\dots & 0 & 0 & 0 & - \Lambda I_n\\
\end{matrix}
\right]
\preceq 0.
\end{align*}
By a change of decision variables, this condition and $P \succ 0$ hold if and only if the program in~\eqref{progr_K_bil_setpoint_CT} is feasible.
In summary, we have shown that \eqref{progr_K_bil_setpoint_CT} implies \eqref{attractivity_CT}, thereby concluding the proof.
\endgroup
\end{proof}

Let us comment Theorem~\ref{thm:UNknown_u_bar}.
Due to the impossibility to achieve exact setpoint regulation for all dynamics consistent with data, see Lemma~\ref{lemma:cannot_find_u_bar}, we resort to the relaxation in Lemma~\ref{lemma:design_u_bar_gamma} and, accordingly, Theorem~\ref{thm:UNknown_u_bar} ensures local asymptotic stability of the set $\{ x \in \real^n \colon (x- \bar{x})^\top P^{-1} (x- \bar{x}) \le \eta \}$, rather than of the point $\bar{x} = 0$ as in Theorem~\ref{thm:known_u_bar:dt}.
To make this set ``small'', a small value for $\eta$ is beneficial.
As for the design parameter $\epsilon$, a quite small value for $\epsilon$ can be used.
As in Theorem~\ref{thm:known_u_bar:dt}, we need a line search in $\lambda$, which is due to the products $\lambda Y$ and $\lambda (I_m \otimes P)$ between decision variables and arises from dealing with the bilinear dynamics; we need a second line search in $s$ due to the product $s P$, which arises from dealing with the disturbance-like term $\ABCd\smat{\bar{x}\\ \bar{u}\\ (I_m \otimes \bar{x}) \bar{u}\\ 1}$, cf.~\eqref{sys_ct_tilde_2}, whose size is minimized but not zero for all $\ABCd \in \mathcal{C}$.
Indeed, the same type of line search occurs in a model-based linear-time-invariant setting for unit peak inputs, see \cite[p.~83]{boyd1994linear}.
We note that instead of decision variables $(s,\tau_\eta,\tau_\gamma)$ satisfying
\begin{align*}
\tau_\eta \ge 0, \tau_\gamma \ge 0, 
s + \tau_\eta \ge 0, \tau_\eta (\eta-1) = \epsilon + s + \tau_\gamma \gamma,
\end{align*}
we can equivalently take decision variables $(s,\tau_\gamma)$ satisfying
\begin{align}
& \epsilon + s + \tau_\gamma \gamma \le 0, \tau_\gamma \ge 0, \epsilon + s \eta + \tau_\gamma \gamma \le 0 \notag \\
& \iff \tau_\gamma \ge 0, s \le -\tfrac{\epsilon + \tau_\gamma \gamma}{\eta} \label{bound_s}
\end{align}
thanks to $\eta \in (0,1)$ and $\epsilon > 0$.

A discrete-time counterpart of Theorem~\ref{thm:UNknown_u_bar} can be obtained along similar lines, but we omit it due to space limitations.

\section{Numerical validation}
\label{sec:example}

\begin{figure}
\centerline{\includegraphics[scale=0.85]{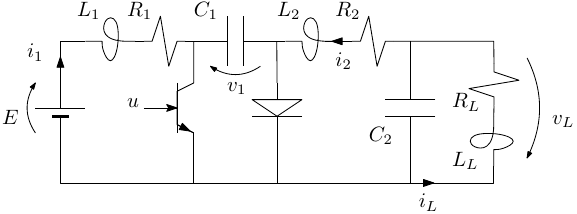}}
\caption{Electrical circuit of a \'Cuk converter as in~\cite{amato2009stabilization}.}
\label{fig:cuk}
\end{figure}

In this section we numerically illustrate our results.
Consider a \'Cuk converter as in~\cite{amato2009stabilization} and depicted in Figure~\ref{fig:cuk}.
After state-space averaging, a \'Cuk converter is described by
\begingroup
\setlength\arraycolsep{1.pt}
\begin{align}
& \dot{x} = A_\star x + B_\star u + C_\star (I_m \otimes x) u + d_\star \notag \\
& = \smat{
   -1.0000   & -1.0000  &         0   &       0  &        0\\
    0.0100   &       0  &         0   &       0  &        0\\
         0   &       0  &   -0.5000   &       0  &  -1.0000\\
         0   &       0  &         0   &-150.000  &  10.0000\\
         0   &       0  &    0.1000   & -0.1000  &        0\\                         
} x +
\smat{
0\\
0\\
0\\
0\\
0\\
} u \notag \\
& 
\quad +
\smat{
0        &   1.0000 &         0 &         0 & 0\\
-0.0100  &        0 &   -0.0100 &         0 & 0\\
0        &   1.0000 &         0 &         0 & 0\\
0        &        0 &         0 &         0 & 0\\
0        &        0 &         0 &         0 & 0\\
} 
(I_m \otimes x) u + 
\smat{    
	30\\
     0\\
     0\\
     0\\
     0} \label{Cuk_matrices}
\end{align}
\endgroup
where $x = (i_1, v_1, i_2, i_L, v_L)$ are the state variables, the duty cycle $u$ is the input, the parameters in $\ABCd[\star]$ correspond to the expressions in \cite[Eq.~(15)]{amato2009stabilization} after substituting the values in \cite[Table 1]{amato2009stabilization}.
The matrices $\ABCd[\star]$ are unknown and not used in our data-based designs.
We consider a desired setpoint $\bar{x} = (2.23, 58.76, 2.00, 2.00, 30.00)$ and $\bar{x}$ can be enforced by $\bar{u} = 0.527480$, which is assumed known in Section~\ref{sec:known_u_bar} and unknown in Section~\ref{sec:UNknown_u_bar}.

\begin{figure}
\centerline{\includegraphics[scale=0.62]{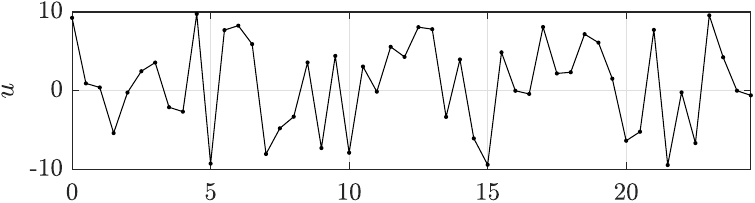}}
\centerline{\includegraphics[scale=0.62]{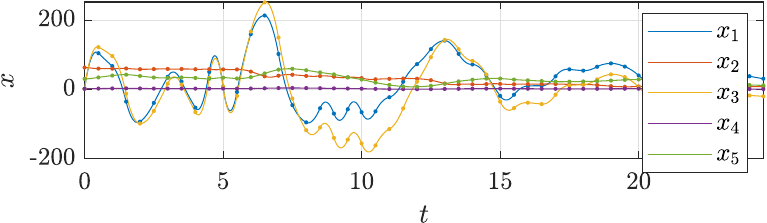}}
\caption{Input and state of the bilinear system during the experiment: dots correspond to samples.
}
\label{fig:data}
\end{figure}

We collect data in an open-loop experiment on the bilinear system~\eqref{Cuk_matrices}.
We use a random signal $u$ while the process noise $e$ is also acting on the bilinear system, as detailed in Section~\ref{sec:data}.
As for $e$, we take the bound $\Xi \Xi^\top = 10^{-4} I$ in the set $\mathcal{E}$ in~\eqref{set_E}.
The input and state signals, along with the $T=50$ samples we use, are depicted in Figure~\ref{fig:data}.
This same dataset is used to design control laws according to Theorem~\ref{thm:known_u_bar:ct} or Lemma~\ref{lemma:design_u_bar_gamma} plus Theorem~\ref{thm:UNknown_u_bar}, through MATLAB\textsuperscript{\textregistered}, MOSEK and YALMIP \cite{lofberg2004yalmip}.

We first consider the approach in Section~\ref{sec:known_u_bar} where $\bar{u}$ is known.
In~\eqref{sol-bil:ct}, the products between decision variables require a line search in $\lambda$ in the interval $[0,5]$, for 50 values of $\lambda$.
The values of $\lambda$ yielding a feasible solution in~\eqref{sol-bil:ct} are in Figure~\ref{fig:lambda_search}: corresponding to each of these $\lambda$, we depict the volume $\sqrt{\det(P)}$ and the diameter $2 \lambda_{\max}(P^{1/2})$ \cite[\S 3.7]{boyd1994linear} of the ellipsoids of the guaranteed basin of attraction $\mathcal{B}_P^{\bar{x}}$.
In general, both of them may be used as convex objective functions in addition to~\eqref{sol-bil:ct}.
Based on Figure~\ref{fig:lambda_search}, we select $\lambda = 0.102$, corresponding to the largest volume and diameter (for the chosen grid), to validate the control in~\eqref{control_law} in closed loop.
For $\lambda = 0.102$, \eqref{sol-bil:ct} returns
\begin{align*}
K & =
\smat{
-0.0747&   -0.0356&   -0.0752&   -9.7590&    0.6449&
},\\
P & = 10^3 \cdot
\smat{
    3.8336&   -0.4901&   -3.1771&    0.1096&    1.6418\\
   -0.4901&    0.7333&    0.1821&    0.0068&    0.1019\\
   -3.1771&    0.1821&    4.7965&   -0.1694&   -2.5369\\
    0.1096&    0.0068&   -0.1694&    0.0181&    0.2720\\
    1.6418&    0.1019&   -2.5369&    0.2720&    4.0786\\
}.
\end{align*}
A resulting solution, with initial condition in the guaranteed basin of attraction, is in Figure~\ref{fig:cl_known} and converges to $\bar{x}$.

\begin{figure}
\centerline{\includegraphics[scale=0.59]{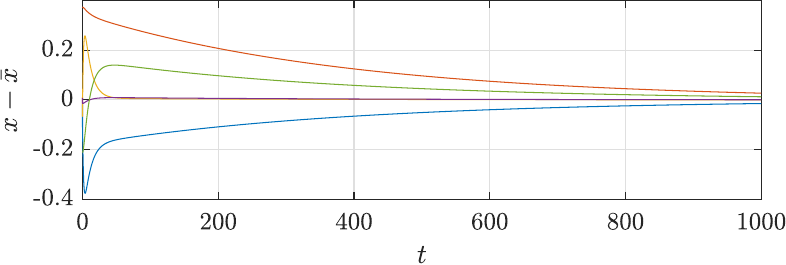}}
\caption{Closed-loop solution for the controller designed as in Section~\ref{sec:known_u_bar}.}
\label{fig:cl_known}
\end{figure}

We then consider the approach of Section~\ref{sec:UNknown_u_bar} where $\bar{u}$ is no longer known but a design parameter.
The first step in this case is solving~\eqref{min_progr_Gamma_bil_equiv_CT}, which yields $\gamma = 1.7251\cdot 10^{-5}$ and $\bar{u} = 0.527477$.
The fact that $\gamma$ is quite small and the designed $\bar{u}$ almost coincides with the actual equilibrium input indicates that, for this example and the uncertainty in $\mathcal{C}$, \eqref{min_progr_Gamma_bil_equiv_CT} or the equivalent \eqref{min_progr_Gamma_CT} are very good proxies for \eqref{ideal_Gamma_for_all_ABC_CT}.
With this $(\gamma,\bar{u})$, the second step is solving \eqref{progr_K_bil_setpoint_CT} for $\eta = 0.1$ and $\epsilon = 10^{-3}$.
In~\eqref{progr_K_bil_setpoint_CT}, the products between decision variables require line searches in both variables $\lambda$ and $s$.
We consider 10 values of $\lambda \in [0.6,1.5]$ and 20 values of $s \in [-0.05, -\frac{\epsilon}{\eta}]$, where the intervals reflect the constraints $\lambda >0$ and $s \le -\tfrac{\epsilon + \tau_\gamma \gamma}{\eta} \le -\tfrac{\epsilon}{\eta}$ from~\eqref{bound_s}.
The values of $(\lambda,s)$ yielding a feasible solution in~\eqref{progr_K_bil_setpoint_CT} are in Figure~\ref{fig:lambda_s_search}, where we depict volume and diameter of the ellipsoid of the guaranteed basin of attraction.
Based on Figure~\ref{fig:lambda_s_search}, we select $(\lambda,s) = (1, -0.03316)$, for which \eqref{progr_K_bil_setpoint_CT} returns
\begin{align*}
K & =
\smat{
   -0.1577&   -2.7357&   -0.1566&   14.3106&   -1.1556
},\\
P & =
\smat{
  133.0138&  -13.1926&  -46.5750&    6.9404&  104.0387\\
  -13.1926&    1.6424&    2.3794&   -0.8529&  -12.7918\\
  -46.5750&    2.3794&   57.2233&   -2.4462&  -36.6814\\
    6.9404&   -0.8529&   -2.4462&    0.5014&    7.5173\\
  104.0387&  -12.7918&  -36.6814&    7.5173&  112.7307\\
}.
\end{align*}
A resulting solution, with initial condition in the guaranteed basin of attraction, is in Figure~\ref{fig:cl_unknown} and converges to a (very small) neighborhood of $\bar{x}$.

\begin{figure}
\centerline{\includegraphics[scale=0.6]{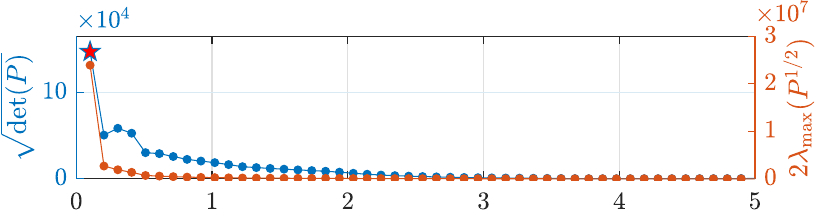}}
\caption{Volume $\sqrt{\det P}$ and diameter $2\lambda_{\max}(P^{1/2})$ of the guaranteed basin of attraction for $\lambda$'s yielding a feasible solution in~\eqref{sol-bil:ct}.}
\label{fig:lambda_search}
\end{figure}

\begin{figure}
\centerline{\includegraphics[scale=0.59]{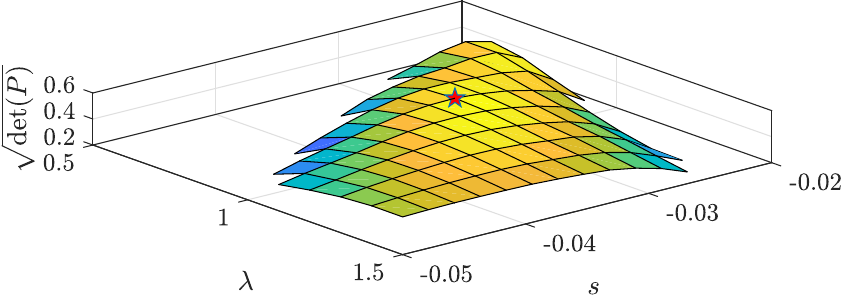}}
\centerline{\includegraphics[scale=0.59]{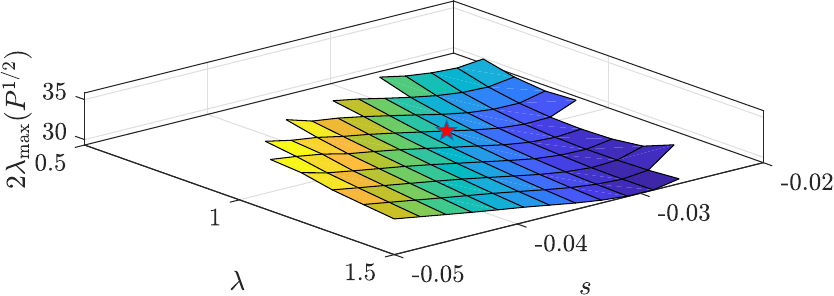}}
\caption{Volume $\sqrt{\det P}$ (top) and diameter $2\lambda_{\max}(P^{1/2})$ (bottom) of the guaranteed basin of attraction for $\lambda$'s yielding a feasible solution in~\eqref{progr_K_bil_setpoint_CT}.}
\label{fig:lambda_s_search}
\end{figure} 

\begin{figure}
\centerline{\includegraphics[scale=0.59]{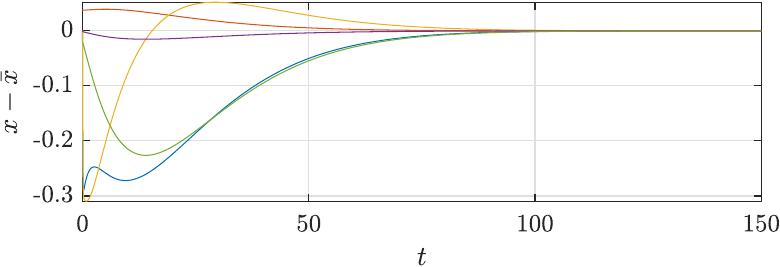}}
\caption{Closed-loop solution for the controller designed as in Section~\ref{sec:UNknown_u_bar}.}
\label{fig:cl_unknown}
\end{figure}

\section{Conclusion}

Starting from noisy data collected from a bilinear system, we have proposed two designs that aim at asymptotically stabilizing a given state setpoint with a guaranteed basin of attraction and take the form of linear matrix inequalities (modulo line searches on scalar variables).
These two designs are for when the equilibrium input corresponding to the setpoint is known or unknown.
An interesting conclusion for the second case is that, in the considered data-based setting, one can only achieve stabilization of a ``small'' neighborhood of the given setpoint.

\bibliographystyle{plain}
\bibliography{pubs-bil-petersen.bib}

\end{document}